\def\BC{\mathcal{B}}
\def\FC{\mathcal{F}}
\def\HC{\mathcal{H}}
\def\C{\mathbf{C}}
\def\E{\mathbf{E}}
\def\P{\mathbf{P}}
\def\R{\mathbf{R}}
\def\1{\mathbf{1}}
\def\tr{\rm{tr}}
\def\al{\alpha}
\def\pa{\partial}
\def\ep{\epsilon}
\def\de{\delta}
\def\ga{\gamma}
\def\ka{\varkappa}
\newtheorem{prop}{Proposition}[section]
\newtheorem{theorem}{Theorem}[section]
\newtheorem{lemma}{Lemma}[section]
\newtheorem{remark}{Remark}
\newcommand{\si}{\sigma}
\newcommand{\Ga}{\Gamma}
\newcommand{\Om}{\Omega}
\begin{document}
\title{The law of large numbers for quantum stochastic filtering and control of many particle systems}

\author{Vassili N. Kolokoltsov\thanks{Department of Statistics, University of Warwick,
 Coventry CV4 7AL UK, associate member of HSE, Moscow,
  Email: v.kolokoltsov@warwick.ac.uk}}
\maketitle
	
\begin{abstract}
There is an extensive literature on the dynamic law of large numbers for systems of quantum particles,
that is, on the derivation of an equation describing the limiting individual behavior of particles
inside a large ensemble of identical interacting particles. The resulting equations are generally
referred to as nonlinear Scr\"odinger equations or Hartree equations, or Gross-Pitaevski equations.
In this paper we extend some of these
convergence results to a stochastic framework. Concretely we work with the Belavkin stochastic filtering
of many particle quantum systems. The resulting limiting equation is an equation of a new type,
which can be seen as a complex-valued infinite dimensional nonlinear diffusion of McKean-Vlasov type.
This result is the key ingredient for the theory of quantum mean-field games developed by the author
in a previous paper.
\end{abstract}

{\bf Key words:} quantum dynamic law of large numbers,
quantum filtering, homodyne detection, Belavkin equation, nonlinear stochastic Schr\"odinger equation,
quantum interacting particles,  quantum control, quantum mean field games, infinite-dimensional McKean-Vlasov
diffusions on manifolds.

{\bf MSC2010:} 91A15, 81Q05, 81Q93, 82C22, 93E11, 93E20.

\section{Introduction}

There is an extensive literature on the dynamic law of large numbers for systems of quantum particles,
that is, on the derivation of an equation describing the limiting individual behavior of particles
inside a large ensemble of identical interacting particles. The resulting equations are generally
referred to as nonlinear Scr\"odinger equations or Hartree equations, or Gross-Pitaevski equations.
The first result of this kind appeared in \cite{Spohn}. Various ingenuous approaches were employed
afterwards in order to extend the conditions of applicability of the convergence results (say, for
various classes of unbounded interacting potentials), exploit various scaling regimes and establishing
the proper rates of convergence, see reviews in \cite{Schlein} and \cite{GolsePaul}.
In this paper, following \cite{KolQuantMFG}, we perform for the first time the extension of some of these
convergence results to a stochastic framework. Concretely we work with the Belavkin stochastic filtering
of many particle quantum systems. The resulting limiting equation is an equation of a new type,
which can be seen as a complex-valued infinite dimensional nonlinear diffusion on a manifold of McKean-Vlasov type.
An optimal control of these equations is beyond the scope of this paper. It leads to slightly more general
HJB  equations on manifolds, as those analysed in \cite{KolDynQuGames}.
 
In \cite{KolQuantMFG} the author merged two recently developed branches of game theory, quantum games and
mean field games (MFGs), creating quantum MFGs. MFGs represent a very popular recent development in game
theory. It was initiated in \cite{HCM3} and \cite{LL2006}. For recent developments one can consult monographs
\cite{BenFr},  \cite{CarDelbook18}, \cite{Gomesbook}, \cite{KolMalbook19} and numerous references therein.
The main ingredient in the rigorous treatment of quantum
MFGs was the quantum stochastic law of large numbers mentioned above stating that the continuously
observed (via nondemolition measurement and quantum filtering) and controlled quantum system of many particles
can be described, in the limit of large number of particles, by a decoupled system of identical nonlinear
stochastic evolutions of individual particles evolved according to some new kind of nonlinear stochastic
Schr\"odinger equation or nonlinear quantum filtering equation. This fact was proved however only under a very
restrictive assumption of the so-called conservative homodyne measurement. In the present paper we obtain
this result in its full generality, namely for arbitrary coupling operators $L$ (not necessarily
anti-Hermitian as in \cite{KolQuantMFG}) with the homodyne detection optical devise. We also obtain
a similar result for jump type stochastic Schr\"odinger equations describing the continuous observations
and filtering of counting type.

 These results make very clear the difference between
  classical and quantum mean-field control. In the former the average used is the measure obtained by averaging
  the Dirac  $\de$-measures arising from the initial conditions of individual particles, and in the latter
the average used is the average of the initial conditions themselves.

The generality chosen implies the necessity to work with the nonlinear quantum filtering equation
leading to a more general limiting equation as in  \cite{KolQuantMFG}, namely to the equation,
which is mostly naturally written in terms of the density matrix $\ga$ as
\[
d\ga_{t}=-i[H+u(t,\ga_{t})\hat H,\ga_{t}] \, dt-i[A^{\bar \eta_t}, \ga_{t}] dt
+(L\ga_{t} L^* -\frac12 L^*L \ga_{t} -\frac12 \ga_{t} L^*L)\, dt
\]
\begin{equation}
\label{eqmainnonlinpartBel1dens0}
+(\ga_{t} L^*+L \ga_{t}-\ga_{t} \, {\tr} (\ga_{t}(L+L^*))) dB_t,
\quad \eta_t(y,z)=\E \ga_{t}(y,z),
\end{equation}
where $H, \hat H,L$ are linear operators in some Hilbert space  $L^2(X)$,  $H, \hat H$ being self-adjoint,
  $A^{\bar \eta_t}$ is the integral operator in $L^2(X)$ with the integral kernel
\[
A^{\bar \eta_t}(x;y)=\int_{X^2} A(x,y;x',y') \overline{\eta_t (y,y')} \, dydy'.
\]
This equation can be looked at as infinite-dimensional complex-valued nonlinear diffusion of McKean-Vlasov type
on the manifold of quantum mechanical states defining a nonlinear Markov process on this manifold in the sense of 
\cite{Kolbook10}.   

In terms of the state vectors $\psi \in L^2(X)$ the limiting equation looks more messy and writes down as
\[
d\psi_t(x) =-i[H +u(t,\psi_t)\hat H +A^{\E (\bar \psi_t \otimes \psi_t)}
-\langle Re \, L\rangle_{\psi} \, Im \, L]  \psi_t(x) \, dt
\]
\begin{equation}
\label{eqmainnonlinco0}
+\frac12 (L-\langle Re \, L\rangle_{\psi_t})^* (L-\langle Re \, L\rangle_{\psi_t}) \psi_t(x) \,dt
+(L-\langle Re \, L \rangle_{\psi_t})\psi_t \, dB_t,
\end{equation}
where, for an operator $L$,
\begin{equation}
\label{eqdefmeanvalueoper}
Re \, L=(L+L^*)/2, \quad Im \, L=(L-L^*)/2i, \quad
\langle L \rangle_v=(v,Lv)/(v,v).
\end{equation}

For the main particular case of a self-adjoint operator $L$  this equation simplifies to the equation
  \[
d\psi_t(x) =-i[H +u(t,\psi_t)\hat H +A^{\E (\bar \psi_t \otimes \psi_t)}]  \psi_t(x) \, dt
\]
\begin{equation}
\label{eqmainnonlinco00}
+\frac12 (L-\langle L\rangle_{\psi_t})^2  \psi_t(x) \,dt
+(L-\langle L \rangle_{\psi_t})\psi_t \, dB_t.
\end{equation}

These equations are different from the nonlinear Schr\"odinger equations that are usually discussed
in the current literature, see e.g.
 \cite{BarbRock16}, \cite{BarbRock18}, \cite{Brzes14}, \cite{GreckGenerNonlinSchr}. In our equation
the nonlinearity depends on the expectations of the correlations calculated with respect to the solution
and thus bearing analogy with the classical McKean-Vlasov nonlinear diffusions.


The content of the paper is as follows. In the next section we recall the basic theory of quantum
continuous measurement and filtering. In the main Section \ref{secmynonlinSchrod} our new nonlinear equations are
introduced for the case of homodyne (or diffusive) detection and the main result on the convergence
of $N$-particle observed quantum evolutions
to the decoupled system of these equations is obtained, together with explicit rates of convergence.
The convergence of solutions is proved here under
the assumption that solutions exist. To repair this drawback, in Section \ref{secwelpos} we prove the well-posedness
of our nonlinear Schr\"odinger equation \eqref{eqmainnonlinco0} including continuous dependence on the
initial conditions with explicit estimates for errors. Since a direct check with Ito's rule shows that
for $\psi$ satisfying   \eqref{eqmainnonlinco0}, the density matrix $\ga_t=\psi_t\otimes \psi_t$ satisfies
 \eqref{eqmainnonlinpartBel1dens0}, it implies the existence of decomposable solutions to
 \eqref{eqmainnonlinpartBel1dens0} which are used in Section   \ref{secmynonlinSchrod}.
In Section \ref{seccounting} the case of a counting observation is analysed, the limiting
stochastic equation of jump-type is introduced and the corresponding convergence result proved.
However, in this case we have to reduce the generality to the case of unitary coupling operators,
because of incompatible intensities of jumps, as explained there.

Appendix A contains a derivation of a quite remarkable technical estimate  that plays the central role
in the proof of our main result.
Appendix B presents certain results on SDEs in Hilbert spaces including McKean-Vlasov type diffusions,
given in the form that is most handy for the application to the well-posedness result of Section \ref{secwelpos}.

\section{Nondemolition observation and quantum filtering}
\label{secnondenmes}

The general theory of quantum non-demolition observation, filtering and resulting
feedback control was built essentially in papers  \cite{Bel87}, \cite{Bel88}, \cite{Bel92}.
For alternative simplified derivations of the main filtering equations given below
(by-passing the heavy theory of quantum filtering) we refer to \cite{BelKol},
\cite{Pellegrini}, \cite{BarchBel}, \cite{Holevo91} and references therein.
For the technical side of organising feedback quantum control in real time,
see e.g. \cite{Armen02Adaptive}, \cite{Bushev06Adaptive} and \cite{WiMilburnBook}.

 We shall describe briefly the main result of this theory.

The non-demolition measurement of quantum systems can be organised in two versions:
photon counting and homodyne detection.

Let us start with the homodyne
(mathematically speaking, diffusive type) detection. Under this type of measurement the
output process $Y_t$ is a usual Brownian motion (under appropriate probability distribution).
The quantum filtering equation can be written in two equivalent ways,
(i) as the linear equation for a non-normalized state:
\begin{equation}
\label{eqqufiBlin}
d\chi_t =-[iH\chi_t +\frac12 L^*L \chi_t ]\,dt+L\chi_t dY_t,
\end{equation}
where the unknown vector $\chi_t$ is from the Hilbert space of the observed quantum system,
which we shall sometimes referred to as the atom,
the self-adjoint operator $H$ is the Hamiltonian
of the corresponding initial (non-observed) quantum evolution and the operator $L$
is the coupling operator of the atom with the optical measurement device specifying the chosen version
of the homodyne detection, or (ii) as the nonlinear equation for the normalized state $\phi=\chi/\|\chi\|$:
\begin{equation}
\label{eqqufiBnonlin}
d\phi_t=-[i(H-\langle Re \, L\rangle_{\phi_t} \, Im \, L)
+\frac12(L-\langle Re \, L\rangle_{\phi_t})^*(L-\langle Re \, L\rangle_{\phi_t})]\phi_t \, dt
+ (L-\langle Re \, L\rangle_{\phi})\phi_t \, dB_t,
\end{equation}
where
\begin{equation}
\label{eqdefinnov}
dB_t=dY_t-\langle L+L^* \rangle_{\chi_t} \, dt=dY_t-\langle L+L^* \rangle_{\phi_t} \, dt
\end{equation}
defines the so-called innovation process $B_t$. It is a direct check via Ito's rule
that \eqref{eqqufiBlin} implies \eqref{eqqufiBnonlin} for $\phi=\chi/\|\chi\|$.

In the most important case of a self-adjoint $L$ these equations simplify to the equations
\begin{equation}
\label{eqqufiBlins}
d\chi_t =-[iH\chi_t +\frac12 L^2 \chi_t ]\,dt+L\chi_t dY_t,
\end{equation}
and, respectively,
\begin{equation}
\label{eqqufiBnonlins}
d\phi_t=-[iH+\frac12(L-\langle L\rangle_{\phi_t})^2]\phi_t \, dt
+ (L-\langle L\rangle_{\phi_t})\phi_t \, dB_t.
\end{equation}

The innovation process is the most natural driving noise to deal with,
because it turns out to be the standard Brownian motion (or the Wiener process)
with respect to the fixed (initial vacuum) state of the homodyne detector.
This fact is one of the key statement of the filtering theory.

In  \cite{KolQuantMFG} we worked with the special case of anti-Hermitian operators $L$, that is $L^*=-L$,
for which nonlinear filtering equation \eqref{eqqufiBnonlins} coincides with the linear one,
the output process coincides with the innovation process, and the linear equation
preserves the norms. Hence this case is referred to as the conservative one, and it is considered as
the less relevant for concrete observations (see discussion in \cite{BarchBook}).

The theory extends naturally to the case of several, say $N$, coupling operators $\{L_j\}$.
For example, the nonlinear quantum filtering equation writes down as
 \[
d\phi_t=-iH \phi_t \, dt + \sum_j [i \langle Re \, L_j\rangle_{\phi_t} \, Im \, L_j
-\frac12(L_j-\langle Re \, L_j\rangle_{\phi_t})^*(L_j-\langle Re \, L_j\rangle_{\phi_t})]\phi_t \, dt
\]
\begin{equation}
\label{eqqufiBnonlinm}
+ \sum_j(L_j-\langle Re \, L_j\rangle_{\phi_t})\phi_t \, dB^j_t,
\end{equation}
where the $N$-dimensional innovation process $B_t=\{B^j_t\}$ is
the standard $N$-dimensional Wiener process connected with the
output process $Y_t=\{Y^j_t\}$ by the equations
\[
dB^j_t=dY^j_t-\langle L_j+L_j^* \rangle_{\chi_t} \, dt.
\]

Recall that the density matrix or density operator $\ga$ corresponding to a unit vector
$\chi\in L^2(X)$ is defined as the orthogonal projection operator on $\chi$.
This operator is usually expressed either as the tensor product  $\ga=\chi\otimes \bar \chi$
or in the most common for physics bra-ket Dirac's notation as $\ga=|\chi\rangle \langle \chi|$.
Of course in the tensor notation $\ga$ is formally an element of the tensor product $L^2(X^2)$.
However, considered as an integral kernel, it is identified with the corresponding integral operator.

As one checks by direct application of Ito's formula,
in terms of the density matrix $\ga$, equation
\eqref{eqqufiBnonlin} rewrites as
\begin{equation}
\label{eqqufiBnonlindens}
d\ga_t=-i[H,\ga_t]\, dt
+(L\ga_t L^* -\frac12 L^*L \ga_t -\frac12 \ga_t L^*L)\, dt
+[\ga_tL^*+L\ga_t-\ga_t \, {\tr} (\ga_t(L^*+L))]\, dB_t,
\end{equation}
and this is the equation we shall work with proving the convergence result.

Let us make here some comments about the evolution of traces.
Equation \eqref{eqqufiBnonlindens} is best suited for dealing with matrices $\ga$ with unit trace.
A convenient extension to arbitrary traces can be written as
\begin{equation}
\label{eqqufiBnonlindens6}
d\ga_t=-i[H,\ga_t]\, dt
+(L\ga_t L^* -\frac12 L^*L \ga_t -\frac12 \ga_t L^*L)\, dt
+[\ga_tL^*+L\ga_t-\frac{\ga_t}{{\tr} \, \ga_t} \, {\tr} (\ga_t(L^*+L))]\, dB_t.
\end{equation}
It is seen directly from this equation that $d\, {\tr} \, \ga_t=0$ and hence the trace is preserved
by evolution \eqref{eqqufiBnonlindens6} (as long as it is well-posed of course).
For equation \eqref{eqqufiBnonlindens} one gets
\begin{equation}
\label{eqqufiBnonlindenstr}
d\, {\tr} \, \ga_t= {\tr} (\ga_t(L^*+L))(1-  {\tr} \, \ga_t) \, dB_t.
\end{equation}
Thus  evolution \eqref{eqqufiBnonlindens} does not preserve traces in general. But
${\tr} \, \ga_t=1$ is a solution of \eqref{eqqufiBnonlindenstr}. Hence, if the initial
condition has  ${\tr} \, \ga_0=1$ and the equation \eqref{eqqufiBnonlindenstr} is well posed, then
the solutions to  \eqref{eqqufiBnonlindens} do preserve the trace.

Similar remarks on the preservation of norms of equation \eqref{eqqufiBnonlin} are worth mentioning.
Calculating $d(\phi,\phi)$ from this equation we see first that the terms with $H$ and $Im \, L$ disappear
and then, by Ito's rule, all terms with differential $dt$ vanish. Therefore
\[
d(\phi_t,\phi_t)=((L-\langle Re \, L\rangle_{\phi_t})\phi_t,\phi_t) \, dB_t
+(\phi_t,(L-\langle Re \, L\rangle_{\phi_t})\phi_t)\, dB_t
\]
\begin{equation}
\label{eqBelpreservenorm}
=((L+L^*-2\langle Re \, L\rangle_{\phi_t})\phi_t, \phi_t)  \, dB_t
=2((Re \, L-\langle Re \, L\rangle_{\phi_t})\phi_t, \phi_t)  \, dB_t=0,
\end{equation}
since $\langle Re \, L\rangle_{\phi_t}=(\phi_t, (Re \, L) \phi_t)/(\phi_t,\phi_t)$.
Hence solutions to \eqref{eqqufiBnonlin} preserve the norm almost surely.

If instead of $\langle Re \, L\rangle_{\phi}$ in  equation \eqref{eqqufiBnonlin} one would use
the non-normalized expression $(\phi, (Re \, L) \phi)$ (as some authors do), one would get
\[
d(\phi_t,\phi_t)=(\phi_t, (Re \, L) \phi_t)(1-(\phi_t, \phi_t)) dB_t,
\]
so that for the initial condition with $\|\phi_0\|=1$ one would conclude that $\|\phi_t\|=1$ by uniqueness, as
for the density matrix case.

Let us briefly describe the case of counting measurements.
In this case the main equation of quantum filtering takes the form
\begin{equation}
\label{eqBeleqcountm}
d\ga_t=- i[H, \ga_t]\, dt +\sum_j (-\frac12 \{L^*_jL_j,\ga_t\}+ {\tr} (L_j\ga_t L^*_j) \ga_t ) \, dt
+\sum_j\left(\frac{L_j\ga_t L^*_j}{{\tr} (L_j\ga_t L^*_j)}-\ga_t\right) dN^j_t,
\end{equation}
where $H$ is again the Hamiltonian of the free (not observed) motion
of the quantum system, the operators $\{L_j\}$ define the coupling of the system
with the measurement devices,  and  the counting (observed) processes $N_t^j$ are
independent and have the position dependent intensities ${\tr} (L^*_jL_j\ga_t)$,
so that the  compensated processes $M_t^j=N_t^j-\int_0^t {\tr} (L_j^*L_j\ga_s) \, ds$ are martingales.
In terms of the compensated processes $M_t^j$ equation \eqref{eqBeleqcountm} rewrites as
\begin{equation}
\label{eqBeleqcountm1}
d\ga_t=- i[H, \ga_t]\, dt +\sum_j (L_j\ga L_j^*-\frac12 \{L^*_jL_j,\ga_t\} ) \, dt
+\sum_j\left(\frac{L_j\ga_t L^*_j}{{\tr} (L_j\ga_t L^*_j)}-\ga_t\right) dM^j_t.
\end{equation}

As in the case of diffusive measurements, this dynamics preserves the set of pure states.
Namely, if $\phi$ satisfies the equation
\begin{equation}
\label{eq2thquantumfileqcountpure}
d \phi_t=-\left(iH +\frac12 \sum_j((L_j^*-1)L_j-(L_j-\|L_j\phi_t\|^2))\right)\phi_t \, dt
+\sum_j \left(\frac{L_j\phi_t}{\|L_j\phi_t\|}-\phi_t\right)  dM_t^j,
\end{equation}
then $\ga_t=\phi_t\otimes \bar \phi_t$ satisfies equation \eqref{eqBeleqcountm1}.

\begin{remark} Equation \eqref{eqBeleqcountm} is slightly nonstandard as the driving noises $N_t^j$ are
position dependent. However there is a natural way to rewrite it in terms of independent driving noises.
Namely, with a standard independent Poisson random measure processes $N^j(dx \,dt)$ on $\R_+\times \R_+$
(with Lebesgue measure as intensity) one can rewrite equation \eqref{eqBeleqcountm} in the following equivalent form:
\[
d\ga_t=- i[H, \ga_t]\, dt +\sum_j ( -\frac12 \{L_j^*L_j,\ga_t\}+ {\tr} (L_j\ga_t L_j^*) \ga_t ) dt
\]
\[
+\sum_j \left(\frac{L_j\ga_t L_j^*}{{\tr} (L_j\ga_t L_j^*)}-\ga_t\right) \1({\tr} (L_j^*L_j\ga_t)\le x) N^j(dx\, dt),
\]
see details of this construction in \cite{Pellegrini10a}. Alternatively, one can make sense
of \eqref{eqBeleqcountm} in terms of the general theory of weak SDEs from \cite{Kol11}.
\end{remark}

The theory of quantum filtering reduces the analysis of quantum dynamic control and games
to the controlled version of evolutions \eqref{eqqufiBnonlindens}. The simplest situation concerns the case
 when the homodyne device is fixed, that is the operators $L_j$ are fixed, and the players can control the
 Hamiltonian $H$, say, by applying appropriate electric or magnetic fields to the atom. Thus equations
 \eqref{eqqufiBnonlindens} or \eqref{eqBeleqcountm} become modified by allowing $H$ to depend on one
 or several control parameters.

\section{The limiting equation: diffusive measurement}
\label{secmynonlinSchrod}

Let $X$ be a Borel space with a fixed Borel measure that we denote $dx$.
For a linear operator $O$ in $L^2(X)$ we shall denote by $O_j$ the operator
in $L^2(X^N)$ that acts on functions $f(x_1, \cdots, x_N)$ as $O$ acting on the variable $x_j$.

Let $H$ be a self-adjoint operator in $L^2(X)$
and $A$ a self-adjoint integral operator in $L^2(X^2)$ with the kernel $A(x,y;x',y')$ that acts
on the functions of two variables as
\[
A\psi(x,y)=\int_{X^2}   A(x,y;x',y') \psi(x',y') \, dx' dy'.
\]
It is assumed that $A$ is symmetric in the sense that it takes
symmetric functions $\psi(x,y)$ (symmetric with respect to permutation of $x$ and $y$) to
symmetric functions. In terms of the kernel this means the equation
\[
A(x,y;x',y')=A(y,x;y',x').
\]

Let us consider the quantum evolution of $N$ particles
driven by the standard interaction Hamiltonian
\begin{equation}
\label{eqHambinaryinter0}
 H(N)f(x_1, \cdots , x_N)=\sum_{j=1}^N H_jf(x_1, \cdots , x_N)
 + \frac{1}{N}\sum_{i<j\le N} A_{ij}f(x_1, \cdots , x_N),
\end{equation}
with $A_{ij}$ denoting the action of $A$ on the variables $x_i,x_j$.

\begin{remark} Alternatively, one can consider a multiplication operator,
by a symmetric function $V(x_i,x_j)$, rather than the integral operator $A$  in \eqref{eqHambinaryinter0}.
\end{remark}

Assume further that this quantum system is observed
via coupling with the collection of identical one-particle operators $L$.
 That is, we consider the filtering equation \eqref{eqqufiBnonlinm} of the type
\[
d\Psi_{N,t} =
  \sum_j [i \langle Re \, L_j\rangle_{\phi} \, Im \, L_j
 -\frac12 (L_j-\langle Re \, L_j\rangle_{\Psi_{N,t}})^*(L_j-\langle Re \, L_j\rangle_{\Psi_{N,t}})]\Psi_{N,t} \, dt
\]
\begin{equation}
\label{eqmainNpartBelnonls}
-iH(N) \Psi_{N,t} \, dt+\sum_{j=1}^N (L_j-\langle Re \, L_j\rangle_{\Psi_{N,t}})\Psi_{N,t} \, dB^j_t.
\end{equation}

We shall show that as $N\to \infty$ the solution of this equation
and the corresponding density matrices $\Ga_{N,t}=\Psi_{N,t}\otimes \overline{\Psi_{N,t}}$
are close to the product
of the solutions of certain one-particle nonlinear stochastic equations, more precisely, that
the partial traces $\Ga^{(j)}_{N,t}$ of $\Ga_{N,t}$ are close to the density matrices $\ga_{j,t}$
of the individual equations.

However, for the study of control we need a more general setting. We shall assume that
the individual Hamiltonian $H$ has a control component, that is, it can be written as $H+u\hat H$ with two self-adjoint
operators $H$ and $\hat H$ and $u$ a real control parameter taken from a bounded interval $[-U,U]$.
 Suppose that, for the idealized limiting evolution, $u$ is chosen as a
 certain function of an observed density matrix $\ga_{j,t}$, that is $u=u(t,\ga_{j,t})$.
Then in the original $N$ particle evolution $u$ will be chosen based on the approximation $\Ga^{(j)}_{N,t}$ to $\ga_{j,t}$,
that is as $u=u(t,\Ga^{(j)}_{N,t})$. Thus the controlled and observed $N$-particle evolution will be given by equation
 \eqref{eqmainNpartBelnonls} with the nonlinear controlled Hamiltonian $H_u(N)$ instead of $H$:

\begin{equation}
\label{eqHambinaryinter1}
 H_u(N)f(x_1, \cdots , x_N)=\sum_{j=1}^N (H_j+u(t,\Ga^{(j)}_N) \hat H_j)f(x_1, \cdots , x_N)
 + \frac{1}{N}\sum_{i<j\le N} A_{ij}f(x_1, \cdots , x_N),
\end{equation}
where $u(t,\ga)$ is some continuous function.

The corresponding density matrix $\Ga_{N,t}=\Psi_{N,t}\otimes \overline{\Psi_{N,t}}$ satisfies the
equation of type \eqref{eqqufiBnonlindens}:
\[
d \Ga_{N,t}
=-i [H_u(N),\Ga_{N,t}] dt-\frac{i}{N} \sum_{l<j\le N} [A_{lj},\Ga_{N,t}] \, dt
+\sum_j (L_j\Ga_{N,t} L_j^* -\frac12 L^*_j L_j \Ga_{N,t} -\frac12 \Ga_{N,t} L^*_j L_j)\, dt
\]
\begin{equation}
\label{eqmainNpartBeldensnonl}
+\sum_j (\Ga_{N,t} L_j^*+L_j \Ga_{N,t}-\Ga_{N,t} \, {\tr} (\Ga_{N,t}(L_j^*+L_j))) dB_t^j.
\end{equation}

Our main objective in this paper is to prove that, as $N\to \infty$, the limiting evolution of each particle is described
by the nonlinear stochastic equation
\[
d\psi_{j,t}(x) =-i[H +u(t,\ga_{j,t})\hat H +A^{\bar \eta_t}
-\langle Re \, L\rangle_{\phi} \, Im \, L]  \psi_{j,t}(x) \, dt
\]
\begin{equation}
\label{eqmainnonlinco}
+\frac12 (L-\langle Re \, L\rangle_{\psi_{j,t}})^* (L-\langle Re \, L\rangle_{\psi_{j,t}}) \psi_{j,t}(x)\,dt
+(L-\langle Re \, L \rangle_{\psi_{j,t}})\psi_{j,t} \, dB^j_t,
\end{equation}
where $A^{\bar \eta_t}$ is the integral operator in $L^2(X)$ with the integral kernel
\[
A^{\bar \eta_t}(x;y)=\int_{X^2} A(x,y;x',y') \overline{\eta_t (y,y')} \, dydy'
\]
and
\[
\eta_t (y,z)=\E (\psi_{j,t} (y) \bar \psi_{j,t}(z)).
\]
The equation for the corresponding density matrix $\ga_{j,t}=\psi_{j,t} \otimes \bar \psi_{j,t}$ writes down as
\[
d\ga_{j,t}=-i[H+u(t,\ga_{j,t})\hat H,\ga_{j,t}] \, dt-i[A^{\bar \eta_t}, \ga_{j,t}] dt
+(L\ga_{j,t} L^* -\frac12 L^*L \ga_{j,t} -\frac12 \ga_{j,t} L^*L)\, dt
\]
\begin{equation}
\label{eqmainnonlinpartBel1dens}
+(\ga_{j,t} L^*+L \ga_{j,t}-\ga_{j,t} \, {\tr} (\ga_{j,t}(L+L^*))) dB^j_t,
\quad \eta_t(y,z)=\E (\psi_{j,t} (y) \bar \psi_{j,t}(z))=\E \ga_{j,t}(y,z).
\end{equation}

The result extends directly to the case when the measurement related to each particle is multi-dimensional,
that is the operator $L$ is vector-valued, $L=(L^1, \cdots, L^k)$, in which case each noise $dB^j_t$ is also
$k$-dimensional, so that the last term in \eqref{eqmainnonlinco}  should be understood as the inner product:
\[
(L-\langle L\rangle_{\psi_{j,t}})\psi_{j,t} \, dB^j_t
=\sum_{l=1}^k(L^l-\langle L^l\rangle_{\psi_{j,t}})\psi_{j,t} \, dB^j_{l,t},
\]
with all other terms containing $L$ understood in the same way.

Our analysis will be carried out via the extension of the method suggested
by Pickl in a deterministic case,
see \cite{Pickl} and  \cite{KnowlesPickl}, to the present stochastic framework.
In Pickl's approach the main measures of the deviation of the solutions $\Psi_{N,t}$ to $N$-particle
systems from the product of the solutions $\psi_t$ to the Hartree equations are the following positive numbers
from the interval $[0,1]$:
\[
\al_N(t)=1-(\psi_t, \Ga_{N,t} \psi_t).
\]

In the present stochastic case, these quantities depend not just on the number of particles
in the product, but on the concrete choice of these particles. The proper stochastic analog
of the quantity $\al_N(t)$ is the collection of random variables
\begin{equation}
\label{eqforalpha}
\al_{N,j}(t)=  1-(\psi_{j,t}, \Ga_{N,t} \psi_{j,t})=1-{\tr}(\ga_{j,t} \Ga_{N,t})=1-{\tr}(\ga_{j,t} \Ga^{(j)}_{N,t}),
\end{equation}
where the latter equation holds by the definition of the partial trace.
Here $\ga_{j,t}$ is identified with the operator in $L^2(X^N)$ acting on the $j$th variable and
$\Ga^{(j)}_{N,t}$ denotes the partial trace of $\Ga_{N,t}$ with respect to all variables except for the $j$th.

The key property of equations \eqref{eqmainNpartBeldensnonl} and \eqref{eqmainnonlinpartBel1dens} is that
their solutions preserve the set of operators with the unit trace
(see discussion around equation \eqref{eqqufiBnonlindenstr} above). Hence \eqref{eqforalpha}
rewrites as

\begin{equation}
\label{eqforalpha1}
\al_{N,j}(t)={\tr}((\1-\ga_{j,t}) \Ga_{N,t})={\tr}((\1-\ga_{j,t}) \Ga^{(j)}_{N,t}).
\end{equation}

Due to the i.i.d. property of the solutions to \eqref{eqmainnonlinco},
the expectations $\E \al_N(t)=\E \al_{N,j}(t)$ are well defined (they do not
depend on a particular choice of particles).

Expressions $\al_{N,j}$ can be linked with the traces by the following inequalities, due to Knowles and Pickl:
\begin{equation}
\label{ineqKnPi}
\al_{N,j}(t)\le {\tr} |\Ga^{(j)}_{N,t}-\ga_{j,t}|\le 2\sqrt{2\al_{N,j}(t)},
\end{equation}
see Lemma 2.3 from \cite{KnowlesPickl}.

\begin{theorem}
\label{thmynonlinSch}
Let $H,\hat H,L$ be operators in $L^2(X)$, with $H,\hat H$ self-adjoint, and $\hat H, L$ bounded.
Let $A$ be a symmetric self-adjoint integral operator $A$ in $L^2(X^2)$ with a Hilbert-Schmidt
kernel, that is a kernel $A(x,y;x',y')$ such that
\begin{equation}
\label{eq1thmynonlinSch}
\|A\|^2_{HS}=\int_{X^4} |A(x,y;x',y')|^2 \, dx dy dx'dy' <\infty,
\end{equation}
\begin{equation}
\label{eq1athmynonlinSch}
A(x,y;x'y')=A(y,x;y',x'), \quad A(x,y;x',y')=\overline{A(x',y';x,y)}.
\end{equation}

Let the function $u(t,\ga)$ with values in a bounded interval $[-U,U]$
be Lipschitz in the sense that
\begin{equation}
\label{eq1thmynonlinSchco}
|u(t,\ga)-u(t,\tilde \ga)|\le \ka \, {\tr} |\ga -\tilde \ga|.
\end{equation}

 Let $\psi_{j,t}$ be solutions to equations
\eqref{eqmainnonlinco} with i.i.d. initial conditions
$\psi_{j,0}$, $\|\psi_{j,0}\|=1$.
Let $\Psi_{N,t}$ be a solution to the $N$-particle equation \eqref{eqmainNpartBelnonls}
with $H(N)$ of type \eqref{eqHambinaryinter1}, with some initial condition
$\Psi_{N,0}$, $\|\Psi_{N,0}\|_2=1$ such that
\[
\al_N(0)= \al_{N,j}(0)=1-\E \, {\tr} (\ga_{j,0} \Ga_{N,0})=1-\E \, {\tr} (\ga_{j,0} \Ga^{(j)}_{N,0})
\]
are equal for all $j$. The main example of such initial condition is of course the product
\[
\Psi_{N,0}=\prod \psi_{j,0}(x_j),
\]
where $\al_{N,j}(0)=0$ for all $j$.
Then
\[
\E \al_N(t) \le \exp\{ (7\|A\|_{HS}+6 \ka \|\hat H\|+28 \|L\|^2)t\} \E \al_{N}(0)
\]
\begin{equation}
\label{eq2thmynonlinSchco}
+(\exp\{(7\|A\|_{HS}+6 \ka \|\hat H\|+28 \|L\|^2)t\}-1)\frac{1}{\sqrt N}.
\end{equation}

\end{theorem}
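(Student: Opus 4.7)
The plan is to apply Ito's product formula to the random variable $\al_{N,j}(t)=1-\tr(\ga_{j,t}\Ga_{N,t})$ using the SDE \eqref{eqmainnonlinpartBel1dens} for $\ga_{j,t}$ and \eqref{eqmainNpartBeldensnonl} for $\Ga_{N,t}$, then take expectations so that all $dB^l_t$-martingales are killed, leaving only an Ito correction coming from the quadratic covariation of the $B^j_t$-driven pieces (since $\ga_{j,t}$ is adapted to $B^j$ alone while $\Ga_{N,t}$ is adapted to the full $N$-dimensional noise), and finally close the resulting linear differential inequality for $\E\al_N(t)$ by Gronwall. The permutation symmetry of $\Psi_{N,t}$ together with the i.i.d. distribution of the $\ga_{j,t}$'s forces $\E\al_{N,j}(t)$ to coincide for all $j$ with $\E\al_N(t)$, so it suffices to estimate the evolution of a single $\E\al_{N,j}$.

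In the drift of $\E\tr(\ga_{j,t}\Ga_{N,t})$ I would then isolate four families of contributions. The free Hamiltonian drifts $-i[H_j,\cdot]$ cancel exactly by cyclicity of the trace. The Lindblad dissipators $L_l\cdot L_l^*-\frac12\{L_l^*L_l,\cdot\}$ together with their Ito corrections from the $B^l_t$-martingale quadratic variations vanish when $l\ne j$ because $\ga_{j,t}$ acts as the identity on the $l$-th factor and the Lindbladian is trace-preserving; for $l=j$, expanding the nonlinear drift and the Ito correction arising from the terms $\ga_{j,t}\tr(\ga_{j,t}(L+L^*))$ and $\Ga_{N,t}\tr(\Ga_{N,t}(L_j^*+L_j))$ leaves a residue bounded pointwise by a multiple of $\|L\|^2\al_{N,j}(t)$. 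The controlled Hamiltonian contributes a term proportional to $|u(t,\Ga^{(j)}_{N,t})-u(t,\ga_{j,t})|\cdot\|\hat H\|$, which via the Lipschitz bound \eqref{eq1thmynonlinSchco} and the Knowles--Pickl inequality \eqref{ineqKnPi} is controlled by $2\sqrt 2\,\ka\|\hat H\|\sqrt{\al_{N,j}(t)}$, and hence, using $2\sqrt{\al}\le 1+\al$, by a multiple of $\ka\|\hat H\|(1+\al_{N,j}(t))$.

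The hard part is the binary interaction. The $N$-particle evolution contributes $-i[\frac{1}{N}\sum_{l<m\le N}A_{lm},\Ga_{N,t}]$, while the tagged particle contributes $-i[A^{\bar\eta_t},\ga_{j,t}]$ with $\eta_t=\E\ga_{j,t}$. Splitting the sum according to whether the pair $(l,m)$ contains $j$, one observes that pairs disjoint from $j$ pair to zero against $\ga_{j,t}$ by cyclicity, since $A_{lm}$ and $\ga_{j,t}$ then act on disjoint variables and commute; only the $N-1$ pairs of the form $(j,l)$ with $l\ne j$ survive. Using the permutation symmetry of $\Ga_{N,t}$ to rewrite each such term via the two-particle marginal and grouping it with the mean-field operator $A^{\bar\eta_t}$, one is left with a residual that measures the difference between the symmetrised empirical two-particle marginal of $\Ga_{N,t}$ and the tensor product $\ga_{j,t}\otimes \E\ga_{j,t}$. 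This is precisely the quantity controlled by the sharp technical estimate of Appendix A, whose output is a bound of the form $C\|A\|_{HS}(\al_N(t)+1/\sqrt N)$, and which is the origin of the $1/\sqrt N$ term in the conclusion.

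Collecting all four estimates and writing $C=7\|A\|_{HS}+6\ka\|\hat H\|+28\|L\|^2$, one obtains the linear differential inequality
\[
\frac{d}{dt}\E\al_N(t)\le C\,\E\al_N(t)+\frac{C}{\sqrt N},
\]
from which Gronwall's lemma immediately yields \eqref{eq2thmynonlinSchco}. The precise numerical constants $7$, $6$ and $28$ are obtained by careful bookkeeping of how many times each operator norm is traversed in the individual estimates and are not conceptually illuminating; the structural core of the argument is the Ito-level cancellation among the nonlinear drifts together with the Appendix A estimate for the binary interaction, which is where the central-limit-type scaling $1/\sqrt N$ ultimately enters.
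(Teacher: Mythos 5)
Your overall architecture matches the paper's: apply Ito's product rule to $\al_{N,j}(t)=1-\tr(\ga_{j,t}\Ga_{N,t})$, kill the martingale parts by taking expectations, use exchangeability to reduce to one tagged particle, split the drift into Hamiltonian, control, interaction and coupling contributions, and close with Gronwall. However, there are two genuine problems. First, your treatment of the control term fails. You bound $|u(t,\Ga^{(j)}_{N,t})-u(t,\ga_{j,t})|$ by $2\sqrt 2\,\ka\sqrt{\al_{N,j}}$ via \eqref{ineqKnPi} and then invoke $2\sqrt{\al}\le 1+\al$; this introduces an $O(1)$ constant into the differential inequality that does not vanish as $N\to\infty$, and Gronwall would then give a bound that does not tend to zero even when $\al_N(0)=0$ --- contradicting \eqref{eq2thmynonlinSchco}. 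The correct point (used implicitly in the estimate \eqref{eqestimfromprebious} imported from the earlier paper) is that the control term is a \emph{product} of two small factors: the Lipschitz difference, which is $O(\sqrt{\al})$, multiplies the commutator trace $\tr(\Ga_{N,t}[\hat H_j,\ga_{j,t}])$, which is itself $O(\|\hat H\|\sqrt{\al})$ because in a basis where $\ga_{j,t}$ projects onto the first vector this trace reduces to off-diagonal entries $\Ga_{1k}$, $\Ga_{k1}$ ($k\neq 1$) each of size $O(\sqrt\al)$. The product is then $O(\al)$ with constant $4\sqrt 2\,\ka\|\hat H\|\le 6\ka\|\hat H\|$, with no residual constant.

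Second, you have misattributed the role of Appendix A. The ``sharp technical estimate'' there (Lemma \ref{lemmaonspectraces}) has nothing to do with the binary interaction $A$ and is not the origin of the $1/\sqrt N$ term. It is the estimate for the coupling-operator part $D_j$: after the terms with $k\neq j$ cancel by cyclicity, one is left with an awkward combination of traces like $\tr(\ga_j L\Ga_N L^*)$, $\tr(\ga_j\Ga_N L)\tr(\Ga_N(L+L^*))$, etc., and the whole point of the paper (and its main technical novelty over the anti-Hermitian case of the earlier work) is that this combination is bounded by $28\|L\|^2\,\tr((1-\ga_j)\Ga_N)=28\|L\|^2\al_{N,j}$ --- a pure $O(\al)$ bound with no $1/\sqrt N$ contribution. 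You dismiss this as ``a residue bounded pointwise by a multiple of $\|L\|^2\al_{N,j}$,'' which is precisely the nontrivial claim requiring the coordinate computation of Appendix A; the author explicitly remarks that the cancellations there have no evident intuitive explanation. Conversely, the $1/\sqrt N$ term comes entirely from the interaction estimate $7\|A\|_{HS}(\E\al_N+1/\sqrt N)$, which the paper does not reprove but cites from the proofs of Theorems 3.1--3.2 of the quantum mean-field games paper; your structural description of that part (only pairs containing $j$ survive, and the residual compares the two-particle marginal with $\ga_{j,t}\otimes\E\ga_{j,t}$) is plausible but you supply no argument for where the $1/\sqrt N$ actually arises.
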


By \eqref{eqforalpha1} it follows that if $\al_{N}(0) \to 0$, as $N\to \infty$ (for instance if $\al_{N}(0)=0$),
then $\E \, {\tr} |\Ga^{(j)}_{N,t}\to \ga_{j,t}| \to 0$, as $N\to \infty$.
Of course, everything remains unchanged for a vector-valued $L$.

\begin{proof}

Using definition \eqref{eqforalpha}  and Ito's product rule we derive that
\begin{equation}
\label{eqforalphaconder}
d \al_{N,j}(t)=-{\tr} (d\Ga_{N,t} \ga_{j,t})-{\tr} (\Ga_{N,t} \, d \ga_{j,t})-{\tr} (d\Ga_{N,t} \, d \ga_{j,t})
=(C_j+D_j)dt+\sum_j F_j dB^j_t,
\end{equation}
where we denoted by $C_j$ the part of the differential, which does not depend on $L_j$, and by $D_j$ the part
that depends on $L_j$. These parts clearly separate, so that $D_j$ does not contain $H,\hat H$ and $A_{jk}$.

We are going to estimate $\E |\dot \al_{N,t}|$, and thus we are interested only in the estimates for $|C_j|$ and $|D_j|$.
The part $C_j$ was calculated and estimated in \cite{KolQuantMFG} (proofs of Theorems 3.1 and 3.2) yielding the estimate
\begin{equation}
\label{eqestimfromprebious}
\E |C_j| \le  7 \|A\|_{HS} \left(\E \, \al_N(t)+\frac{1}{\sqrt N}\right)
+4\sqrt 2 \ka \|\hat H\|   \E \al_N(t).
\end{equation}

As was shown in \cite{KolQuantMFG}, $D_j$ vanishes for the case $L_j=-L_j^*$. For the general case we derive from
\eqref{eqforalphaconder},  \eqref{eqmainnonlinpartBel1dens} and \eqref{eqmainNpartBeldensnonl} that
(we omit the index $t$ for brevity below)
\[
D_j={\tr} \bigl[ \sum_k (\frac12 L^*_k L_k\Ga_N +\frac12 \Ga_N L_k^*L_k-L_k \ga_N L_k^*)\ga_j
+\Ga_N (\frac12 L^*_j L_j \ga_j +\frac12 \ga_j L_j^*L_j-L_j \ga_j L_j^*)
\]
\[
-(\Ga_N L_j^*+L_j \Ga_N -\Ga_N \, {\tr} (\Ga_N(L_j^*+L_j)))(\ga_j L_j^*
+L_j \ga_j -\ga_j \, {\tr} (\ga_j(L_j^*+L_j)))\bigr],
\]
where we omitted the index $t$ for brevity.

The terms with $k\neq j$ vanish, because
\[
{\tr} \, (L_k^*L_k \Ga_N \ga_j)={\tr} \, (\Ga_N L_k^*L_k \ga_j)
={\tr} \, (L_k \Ga_N L_k^*\ga_j)={\tr} \, (\Ga_N \ga_jL_k^*L_k).
\]
We are left with
\[
D_j={\tr} \bigl[  \frac12 L^*_j L_j\Ga_N \ga_j +\frac12 \Ga_N L_j^*L_j\ga_j-L_j \Ga_N L_j^*\ga_j
+\frac12 \Ga_N L^*_j L_j \ga_j +\frac12 \Ga_N\ga_j L_j^*L_j-\Ga_NL_j \ga_j L_j^*
\]
\[
-\Ga_N L_j^*\ga_j L_j^*-\Ga_N L_j^*L_j \ga_j-L_j \Ga_N \ga_j L_j^*-L_j \Ga_N L_j\ga_j
\]
\[
+(\Ga_N L_j^*+L_j \Ga_N)\ga_j \, {\tr} (\ga_j(L_j^*+L_j))
+\Ga_N \, {\tr} (\Ga_N(L_j^*+L_j))(\ga_j L_j^*+L_j \ga_j)
\]
\[
-\Ga_N \ga_j\, {\tr} (\Ga_N(L_j^*+L_j))\, {\tr} (\ga_j(L_j^*+L_j))\bigr].
\]
Further cancelation yields the following rather awkwardly looking expression:
\[
D_j=-{\tr} (\ga_j L_j \Ga_N L_j^*+\ga_j L_j^* \Ga_N L_j +\ga_j L_j^*\Ga_N L_j^*+\ga_j L_j \Ga_N L_j)
\]
\[
+ {\tr} (\ga_j\Ga_N L_j^*+\ga_j L_j \Ga_N) \, {\tr} (\ga_j(L_j^*+L_j))
+ {\tr} (\ga_j \Ga_N L_j+ \ga_j L_j^* \Ga_N) \, {\tr} (\Ga_N(L_j^*+L_j))
\]
\[
-{\tr} (\Ga_N \ga_j)\, {\tr} (\Ga_N(L_j^*+L_j))\, {\tr} (\ga_j(L_j^*+L_j)).
\]
However, by Lemma \ref{lemmaonspectraces} from Appendix,
\[
|D_j| \le  28 \|L\|^2 \, {\tr}\, ((1-\ga_j) \Ga_N).
\]
Combining this estimate with \eqref{eqestimfromprebious} it follows that
\[
|\dot \E \al_{N,t} \le  7 \|A\|_{HS} \left(\E \, \al_N(t)+\frac{1}{\sqrt N}\right)
+(4\sqrt 2 \ka \|\hat H\|+28\|L\|^2)   \E \al_N(t).
\]
Applying Gronwall's lemma yields \eqref{eq2thmynonlinSchco}.

\end{proof}

\section{A well-posedness result}
\label{secwelpos}

This section is devoted to the well-posedness of equation \eqref{eqmainnonlinco0}:

\[
d\psi_t(x) =-i[H \psi_t(x)+u(t,\psi_t)\hat H +A^{\E (\bar \psi_t \otimes \psi_t)} \psi_t(x)
-\langle Re \, L\rangle_{\phi} \, Im \, L]  \psi_t(x) \, dt
\]
\begin{equation}
\label{eqmainnonlinco0rep}
+\frac12 (L-\langle Re \, L\rangle_{\psi_t})^* (L-\langle Re \, L\rangle_{\psi_t}) \psi_t(x) \,dt
+(L-\langle Re \, L \rangle_{\psi_t})\psi_t \, dB_t.
\end{equation}

In this analysis we shall use the results of Appendix B and the notations
for spaces and processes introduced there.
Equation \eqref{eqmainnonlinco0rep} is of type \eqref{eqMcKeVlBan},
with the Hilbert space $\HC=L^2(X)$ and $A=-iH$.

As an auxiliary tool we shall study
the more standard equations
\[
d\psi_t(x) =-i[H \psi_t(x)+u(t,\psi_t)\hat H +A^{\xi_t} \psi_t(x)
-\langle Re \, L\rangle_{\phi} \, Im \, L]  \psi_t(x) \, dt
\]
\begin{equation}
\label{eqmainnonlinco0rep1}
+\frac12 (L-\langle Re \, L\rangle_{\psi_t})^* (L-\langle Re \, L\rangle_{\psi_t}) \psi_t(x) \,dt
+(L-\langle Re \, L \rangle_{\psi_t})\psi_t \, dB_t,
\end{equation}
with $\xi_t(y,z)$ a given continuous curve in $\HC\otimes \HC$ and
\[
d\psi_t(x) =-i[H \psi_t(x)+u_t\hat H +A^{\xi_t} \psi_t(x)
-\langle Re \, L\rangle_{\phi} \, Im \, L]  \psi_t(x) \, dt
\]
\begin{equation}
\label{eqmainnonlinco0rep2}
+\frac12 (L-\langle Re \, L\rangle_{\psi_t})^* (L-\langle Re \, L\rangle_{\psi_t}) \psi_t(x) \,dt
+(L-\langle Re \, L \rangle_{\psi_t})\psi_t \, dB_t,
\end{equation}
with $u_t$ a given continuous function.

These equations are of type \eqref{eqMcKeVlBanau} of Appendix B.

The results of Appendix B are not directly applicable to equations
\eqref{eqmainnonlinco0rep} and \eqref{eqmainnonlinco0rep1}, because the
Lipschitz continuity is only local for terms containing $u$ and $\xi$.
However, as was proved in \cite{BarchBook}, the terms containing $L$ are
uniformly Lipschitz for bounded $L$.

\begin{remark}
The proof of this fact is obtained in \cite{BarchBook} by elementary direct estimates, which
were however rather lengthy and not very intuitive. To see this fact more
directly one can just note that the derivatives of all coefficients involving
$L$ in  \eqref{eqmainnonlinco0rep1} are bounded. For instance, the last coefficient
in  \eqref{eqmainnonlinco0rep1} has the form
\[
f(\psi)=\frac{(\psi,M\psi)}{(\psi,\psi)}\psi
\]
with a self-adjoint $M$. Taking a simpler situation with a real Hilbert space we find
\[
\frac{\pa f}{\pa \psi}
 =\frac{(\psi,M\psi)}{(\psi,\psi)} \1
 +2\frac{M\psi\otimes \psi}{(\psi,\psi)}-2\frac{(M\psi, \psi)\psi\otimes \psi}{(\psi,\psi)^2}
\]
so that
\[
\|\frac{\pa f}{\pa \psi} \| \le 5\|M\|.
\]
\end{remark}

\begin{theorem}
\label{thwelpos}
Let $H,\hat H,L$ be self-adjoint operators in $L^2(X)$, with $\hat H$ and $L$ being bounded,
and $A$ be an integral operator in $L^2(X^2)$ with a kernel $A(x,y;x',y')$ satisfying
\eqref{eq1thmynonlinSch} and \eqref{eq1athmynonlinSch}.

Let the function $u(t,\ga)$ with values in a bounded interval $[-U,U]$  be Lipschitz
either as a function of $\psi$ or as a function of $\psi\otimes \bar \psi$,
that is either
\begin{equation}
\label{eq1thwelpos}
|u(t,\psi)-u(t,\tilde \psi)|\le \ka \|\psi -\tilde \psi\|,
\end{equation}
or
\begin{equation}
\label{eq2thwelpos}
|u(t,\psi)-u(t,\tilde \psi)|\le \ka (\|\psi\|+\|\tilde \psi\|)\|\psi -\tilde \psi\|.
\end{equation}
Then (i) equation \eqref{eqmainnonlinco0rep} is globally well-posed in the sense of mild solutions;

(ii) for any continuous curve $\xi_t$ in $(L^2(X))^{\otimes 2}$ equation \eqref{eqmainnonlinco0rep1}
is globally well-posed in the sense of mild solutions;

(iii) for any continuous curve $\xi_t$ in $(L^2(X))^{\otimes 2}$
and a continuous function $u_t$ equation \eqref{eqmainnonlinco0rep2}
is globally well-posed in the sense of mild solutions;

(iv) solutions $\psi_t$ to equation \eqref{eqmainnonlinco0rep1} depend continuously on the initial conditions
in the following precise sense: for two solutions with the initial conditions $Y_1$ and $Y_2$ we have the estimate
\begin{equation}
\label{eq3thwelpos}
\|\psi^1-\psi^2\|_{ad,T}\le  2 e^{aT+bT^2} \|Y_1 -Y_2\|,
\end{equation}
where $a,b$ are constants depending on $\ka, U, \|\hat H\|, \| L\|, \|A\|_{HS}$.

(v) for all these equations to hold in the strong sense it is sufficient to assume that there
  exists an invariant core $D$ of the group $e^{iH}$ and a norm $\|.\|_D$ on it such that $\|.\|_D\ge \|.\|$,
  $D$ is a Banach space under this norm and the operators $\hat H, L, A^{\xi}$ are bounded operators
  $L^2(X)\to D$, for the latter operator uniformly with respect to bounded sets of curves $\xi_t$,
  and the initial condition is taken from $D$.
\end{theorem}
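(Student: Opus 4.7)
The plan is to recast each of \eqref{eqmainnonlinco0rep}, \eqref{eqmainnonlinco0rep1} and \eqref{eqmainnonlinco0rep2} into the abstract frames \eqref{eqMcKeVlBan} and \eqref{eqMcKeVlBanau} of Appendix B and invoke its general theory. The key preparatory observation is that, although the coefficients involving $\lan Re \, L\ran_\psi$ are only locally Lipschitz in $\psi$ because of the normalizer $1/(\psi,\psi)$, two facts already recorded in the paper render them effectively globally Lipschitz on the physically relevant state space. First, computation \eqref{eqBelpreservenorm} shows that the unit sphere of $L^2(X)$ is almost-surely invariant under all three dynamics, so $\|\psi_t\|=\|\psi_0\|$ almost surely. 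Second, the Remark preceding the theorem verifies that the Fr\'echet derivatives of every $L$-dependent coefficient are bounded by a universal constant multiple of $\|L\|$ on $L^2(X)\setminus\{0\}$. Truncating these coefficients outside a thin annulus around the unit sphere therefore produces globally Lipschitz functionals that agree with the original ones along any unit-norm solution.

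I would then prove the three existence statements in order of increasing coupling. Part (iii) is the easiest: with $u_t$ and $\xi_t$ given continuous inputs, \eqref{eqmainnonlinco0rep2} becomes an instance of \eqref{eqMcKeVlBanau} with globally Lipschitz drift and diffusion after the truncation, and global well-posedness of mild solutions follows from the standard Picard contraction in the space of adapted square-integrable processes used in Appendix B. Part (ii) additionally incorporates the feedback term $u(t,\psi)\hat H\psi$; under either \eqref{eq1thwelpos} or \eqref{eq2thwelpos} this is Lipschitz on the invariant unit ball (in the second case because $\|\psi\|,\|\tilde\psi\|\le 1$), so the same Picard argument applies. Part (i) is the genuine McKean--Vlasov step: I would construct a map $\Phi$ on the space of continuous curves $\xi\colon[0,T]\to L^2(X)\otimes L^2(X)$ with $\|\xi_t\|\le 1$ that sends $\xi$ to $\E(\bar\psi^\xi_t\otimes\psi^\xi_t)$, where $\psi^\xi$ is the solution of part (ii) with that $\xi$. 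The Hilbert--Schmidt hypothesis \eqref{eq1thmynonlinSch} renders $A^\xi$ Lipschitz in $\xi$, and a short-time contraction estimate, iterated up to any horizon $T$, produces the desired unique fixed point; this is precisely the scheme of \eqref{eqMcKeVlBan}.

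For part (iv) I would subtract the mild-solution representations of two solutions started at $Y_1,Y_2$, bound each term by its Lipschitz constant, and arrive at an integral inequality of the form
\[
\|\psi^1-\psi^2\|^2_{ad,t}\le \|Y_1-Y_2\|^2+a\int_0^t \|\psi^1-\psi^2\|^2_{ad,s}\,ds+b\int_0^t (t-s)\|\psi^1-\psi^2\|^2_{ad,s}\,ds,
\]
the second integral arising from the time-averaged McKean--Vlasov contribution through $A^{\E(\bar\psi\otimes\psi)}$; a Gr\"onwall argument then yields the exponential bound \eqref{eq3thwelpos} with $a,b$ depending on $\ka, U, \|\hat H\|, \|L\|, \|A\|_{HS}$. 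Part (v) is routine: given an $e^{-iHt}$-invariant core $D$ on which $\hat H, L, A^\xi$ act boundedly, Duhamel's formula combined with the closedness of $H$ upgrades the mild solution to a strong one. The main obstacle I anticipate is the Lipschitz estimate in part (i) for $\Phi$, namely controlling $\E\|\psi^{\xi_1}_t-\psi^{\xi_2}_t\|^2$ by $\sup_{s\le t}\|\xi_1(s)-\xi_2(s)\|^2$: this requires simultaneously invoking the It\^o isometry on the $dB_t$-term, the Hilbert--Schmidt norm of $A$, and the a priori unit-norm bound, and is where the constants $a,b$ in \eqref{eq3thwelpos} acquire their explicit form. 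Once this Lipschitz dependence is made quantitative, the rest of the proof is stochastic-analysis bookkeeping aligned with Appendix B.
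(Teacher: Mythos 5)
Your proposal is correct and follows essentially the same route as the paper: norm preservation via \eqref{eqBelpreservenorm} reduces everything to the unit sphere where the $L$-dependent coefficients are Lipschitz, part (iii) is an instance of the Appendix B contraction (Proposition \ref{propSDEBanach}), parts (i)--(ii) come from a short-time fixed point for the map $\xi\mapsto\E(\bar\psi^\xi\otimes\psi^\xi)$ iterated to arbitrary horizons, (iv) is the same Gr\"onwall subtraction of mild solutions, and (v) is Proposition \ref{propSDEBanach3}. The only cosmetic differences are your explicit truncation near the unit sphere (the paper instead just records the derivative bound) and your attribution of the extra integral term in (iv) to the McKean--Vlasov coupling, which is absent in \eqref{eqmainnonlinco0rep1} since $\xi_t$ is fixed there; neither affects the argument.
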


\begin{proof}
Due to the remark on the Lipschitz continuity of the coefficients involving $L$,
equation \eqref{eqmainnonlinco0rep2} satisfies the conditions of Proposition \ref{propSDEBanach},
and hence statement (iii) is a consequence of this Proposition. Moreover, by Proposition \ref{propSDEBanach3},
(v) follows from (i)-(iii). To show (i) and (ii) we note that the solutions to \eqref{eqmainnonlinco0rep2}
preserve norms almost surely. This is a standard fact shown by calculations \eqref{eqBelpreservenorm}.

\begin{remark}
Calculations \eqref{eqBelpreservenorm} were performed assuming the strong (not just mild) form of the equation.
That is, strictly speaking under assumptions of statement (iv) only. However, by approximating operators
$\hat H, L, A^{\xi}$ by operators satisfying assumptions of (iv), one obtains this result for the general case.
\end{remark}

In particular, looking for the solutions of equations \eqref{eqmainnonlinco0rep} and
\eqref{eqmainnonlinco0rep1} we can reduce our attentions to curves $\xi_t$
such that $\|\xi_t\|\le 1$.
Therefore comparing two mild solutions $\psi^1_t$ and $\psi^2_t$ of equation \eqref{eqmainnonlinco0rep2} with the same
initial condition $\psi_0$, $\|\psi_0\|=1$, but different pairs $(\xi^1_t, u^1_t)$, $(\xi^2_t, u^2_t)$
such that $\|\xi^j_t\|\le 1$ we find,
similarly to calculations performed in the proof of Proposition  \ref{propSDEBanach}, that
\[
\E \|\psi^1_t-\psi^2_t\|^2
\le (a +b t)\E \int_0^t \|\psi^1_s - \psi^2_s\|^2\, ds
\]
\[
+ \|\hat H\|^2 (\int_0^t |u^1_s-u^2_s| ds)^2
+\|A\|^2_{HS}  (\int_0^t \|\xi^1_s-\xi^2_s\| ds)^2
\]
\[
\le  (a +b t)\E \int_0^t \|\psi^1_s - \psi^2_s\|^2\, ds
+\|\hat H\|^2 t \|u^1_.-u^2_.\|_t^2
+\|A\|^2_{HS} t\|\xi^1_.-\xi^2_.\|_t^2,
\]
where $a,b$ are constants depending on $U, \|\hat H\|, \| L\|, \|A\|_{HS}$,
and where we denote by $\|.\|_t$ the maximum of the corresponding norms
of the functions at all times $s\in [0,t]$.

Consequently, by Gronwall's lemma and because $\psi^1_0=\psi^2_0$,
\[
\E \|\psi^1_t-\psi^2_t\|^2 \le e^{at+bt^2} t (\|\hat H\|^2 \|u^1_.-u^2_.\|_t^2
+\|A\|^2_{HS} \|\xi^1_.-\xi^2_.\|_t^2)
\]
and thus
\[
\|\psi^1_.-\psi^2_.\|_{ad,t} \le e^{(at+bt^2)/2} \sqrt t (\|\hat H\| \|u^1_.-u^2_.\|_t
+\|A\|_{HS} \|\xi^1_.-\xi^2_.\|_t).
\]
Thus, by choosing $t$ small enough, one can make the Lipschitz constant of the mapping
 $(\xi_., u_.) \to X_.$ arbitrary small.
Since $u(t,\ga)$ is Lipschitz and the mapping $X \to \E (\bar X \otimes X)$ is Lipschitz
(uniformly for $X$ with $\|X\|=1$) we find that the composition mapping
\[
 (\xi_., u_.) \to \psi_. \to   \E (\bar \psi_. \otimes \psi_.)
 \]
 is a contraction for small times. Hence it has a unique fixed point. By iteration
 we build a unique solution to    \eqref{eqmainnonlinco0rep} proving (i).

 Statement (ii) is proved analogously.

It remains to show (iv). For two solutions
$\psi^1_t$ and $\psi^2_t$ of equation \eqref{eqmainnonlinco0rep1} with some unit
initial condition $Y_1$, $Y_2$ and some $\xi_t$ with $\|\xi_t\|\le 1$ we find,
similarly to above, that
\[
\E \|\psi^1_t-\psi^2_t\|^2
\le 2\|Y_1-Y_2\|^2+(a +b t)\E \int_0^t \|\psi^1_s - \psi^2_s\|^2\, ds.
\]
Hence by Gronwall's lemma it follows that
\[
\E \|\psi^1_t-\psi^2_t\|^2 \le 2 e^{at+bt^2} \|Y_1-Y_2\|^2,
\]
implying \eqref{eq3thwelpos}.
\end{proof}

\section{The limiting equation: counting measurement}
\label{seccounting}

The analog of equation \eqref{eqmainNpartBeldensnonl} describing the observation of a collection of
identical quantum particles, arising from the general quantum filtering equation
\eqref{eqBeleqcountm1} or observation of counting type is the equation

\[
d \Ga_{N,t}
=\left(-i\sum_j [H_j,\Ga_{N,t}]-\frac{i}{N} \sum_{l<j\le N} [A_{lj},\Ga_{N,t}]
+\sum_k (L_k\Ga_{N,t} L_k^* -\frac12 L^*_kL_k \Ga_{N,t} -\frac12 \Ga_{N,t} L^*_kL_k)\right)\, dt
\]
\begin{equation}
\label{eqmainNpartBeldensnonlcount}
+ \sum_k\left(\frac{L_k \Ga_{N,t} L^*_k}{{\tr} \, (\Ga_{N,t} L^*_kL_k)}- \Ga_{N,t}\right) \, dM^k_t,
\end{equation}
where $M^l_t$ are martingales such $dM^k_t=dN^k_t+{\tr} \, (\Ga_{N,t} L^*_kL_k)$
 and $N^k_t$ are counting processes with the intensities ${\tr} \, (\Ga_{N,t} L^*_kL_k)$.
The quantum filtering equation \eqref{eq2thquantumfileqcountpure} in terms of the pure states takes the form
\[
d \Psi_{N,t}=-iH(N) \Psi_{N,t}
    -\frac12 \sum_j((L_j^*-1)L_j-(L_j-\|L_j\Psi_{N,t}\|^2))\Psi_{N,t} \, dt
\]
\begin{equation}
\label{eqmainNpartBelnonlcount}
+\sum_j \left(\frac{L_j\Psi_{N,t}}{\|L_j\Psi_{N,t}\|}-\Psi_{N,t}\right)  dM_t^j.
\end{equation}

 The corresponding analog of the limiting equation \eqref{eqmainnonlinpartBel1dens} is the equation
\[
d \ga_{j,t}
=\left(-i [H_j,\ga_{j,t}]-i [A^{\bar\eta_t},\ga_{j,t}]
+(L_j\ga_{j,t} L_j^* -\frac12 L^*_jL_j \ga_{j,t} -\frac12 \ga_{j,t} L^*_jL_j)\right)\, dt
\]
\begin{equation}
\label{eqmainnonlinpartBel1denscount}
+ \left(\frac{L_j \ga_{j,t} L^*_j}{{\tr} \, (\ga_{j,t}L^*_jL_j)}- \ga_{j,t}\right) \, dM^j_t,
\end{equation}
where $M^j_t$ are martingales such $dM^k_t=dN^k_t+{\tr} \, (\ga_{j,t} L^*_jL_j)$
 and $N^k_t$ are counting processes with the intensities ${\tr} \, (\ga_{j,t} L^*_jL_j)$.

The corresponding equation for pure states is
\[
d \psi_{j,t}
=-i (H_j  +A^{\bar\eta_t})\psi_{j,t} \, dt
\]
\begin{equation}
\label{eqmainnonlinpartBel1count}
-\frac12 ((L_j^*-1)L_j-(L_j-\|L_j\psi_{j,t}\|^2))\psi_{j,t} \, dt
+\left(\frac{L_j\psi_{j,t}}{\|L_j\psi_{j,t}\|}-\psi_{j,t}\right)  dM_t^j.
\end{equation}

One sees directly the principle problem with this situation, which does not occur in the diffusive case. In
the latter both the approximating and the limiting equations are written  in terms of the same Brownian motions.
In the present case the driving noises are different (intensities of jumps depends on different objects).
To avoid this problem,  we shall deal only with a special conservative case when the operator $L$ is unitary:
 $L^*=L$. In this case equation   \eqref{eqmainNpartBeldensnonlcount} becomes linear, the intensities
 of all jump process $N_t^j$ become identical and equal to one (and thus not depending on a state of
 the process), so that the noises $M^j_t$ in
 \eqref{eqmainNpartBeldensnonlcount} and  \eqref{eqmainnonlinpartBel1denscount} can be identified.

\begin{remark}
The natural idea for dealing with the general case would be by organising certain coupling
between the counting processes $N^k_t$ from   equations  \eqref{eqmainNpartBeldensnonlcount}
and  \eqref{eqmainnonlinpartBel1denscount}. The natural coupling is the  marching coupling,
where the intensity of the coupled part is given by the minimum of the intensities of individual
noises. The author did not manage to prove convergence under such a coupling. The conjecture is that
it really does not hold. The reason is that one can prove convergence under the coupling that
makes the intensity of the coupled process to be equal to the maximum of the intensities of individual
noises. Of course such coupling destroys the individual dynamics.
\end{remark}

Thus let us look at the dynamics of $\al_{N,j}$,
\[
d \al_{N,j}=-{\tr} (d\Ga_{N,t} \ga_j(t))-{\tr} (\Ga_{N,t} \, d\ga_j(t))-{\tr} (d\Ga_{N,t} \, d\ga_j(t)),
\]
assuming $L=L^*$ and that the noises $M^j_t$ in
 \eqref{eqmainNpartBeldensnonlcount} and  \eqref{eqmainnonlinpartBel1denscount} are identified.

The part containing $H,A$ is the same as for diffusion. So we are interested only in the part containing $L_j$.

Recall the Ito multiplication rule for counting processes $dN_t^j dN_t^j=dN_t^j$.
Under assumptions of unitarity of $L$ it follows that
\[
dM_t^j dM_t^j=dN_t^j=dM_t^j+ {\tr} (\Ga_N L_j^*L_j) dt=dM_t^j+dt.
\]

The part of the stochastic differential (at $dM^j_t$) in the expression for $d \al_{N,j}$ is of no interest
for us, as we are looking for the expectation of $d \al_{N,j}$, which is not affected by these martingale terms.

It turns out that, under the unitarity assumption, the part at $dt$ depending on $L_j$ vanishes.
In fact (omitting index $t$ for brevity), taking into account that $\ga_j$ and $\Ga_N$
have unit traces, the coefficient at $dt$ depending on $L_j$ writes down as
\[
{\tr} \left[\frac12 \Ga_N \ga_j +\frac12 \Ga_N \ga_j-L_j \Ga_N L_j^*\ga_j
+\frac12 \Ga_N \ga_j +\frac12 \Ga_N\ga_j -\Ga_NL_j \ga_j L_j^*\right]
\]
\[
- {\tr} \left[(L_j \ga_j L^*_j- \ga_j)(L_j \Ga_N L^*_j- \Ga_N)\right]
\]
\[
={\tr} [2\ga_j \Ga_N  -\ga_j L_j\Ga_N L_j^*-\ga_j L^*_j \Ga_N L_j
-L_j\ga_j L_j^*L_j \Ga_N L_j^*+\ga_j L_j \Ga_N L_j^*+\ga_j L_j^*\Ga_N L_j-\ga_j \Ga_N]=0.
\]

Thus turning to the expectations of  $\al_{N,j}$ we have the same situation as in Theorem
\ref{thmynonlinSch}, but in the simpler version of the absence of $L$ in all estimates.
Consequently the following result holds.

\begin{theorem}
\label{thmynonlinSchcount}
Let $H,\hat H$ be self-adjoint operators in $L^2(X)$, with $\hat H$ bounded, and $L$ a unitary operator.
Let $A$ be a symmetric self-adjoint integral operator $A$ in $L^2(X^2)$ with a Hilbert-Schmidt
kernel $A(x,y;x',y')$ satisfying
\eqref{eq1thmynonlinSch} and \eqref{eq1athmynonlinSch}.
Let the function $u(t,\ga)$ with values in a bounded interval $[-U,U]$
be Lipschitz in the sense of
\eqref{eq1thmynonlinSchco}.

 Let $\psi_{j,t}$ be solutions to equations
\eqref{eqmainnonlinpartBel1count} with i.i.d. initial conditions
$\psi_{j,0}$, $\|\psi_{j,0}\|=1$.
Let $\Psi_{N,t}$ be a solution to the $N$-particle equation \eqref{eqmainNpartBelnonlcount}
with $H(N)$ of type \eqref{eqHambinaryinter1}, with some initial condition
$\Psi_{N,0}$, $\|\Psi_{N,0}\|_2=1$ such that
\[
\al_N(0)= \al_{N,j}(0)=1-\E \, {\tr} (\ga_{j,0} \Ga_{N,0})=1-\E \, {\tr} (\ga_{j,0} \Ga^{(j)}_{N,0})
\]
are equal for all $j$.
Then
\[
\E \al_N(t) \le \exp\{ (7\|A\|_{HS}+6 \ka \|\hat H\|)t\} \E \al_{N}(0)
\]
\[
+(\exp\{(7\|A\|_{HS}+6 \ka \|\hat H\|)t\}-1)\frac{1}{\sqrt N}.
\]

\end{theorem}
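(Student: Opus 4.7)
The plan is to follow exactly the template used in the proof of Theorem \ref{thmynonlinSch}, exploiting the reductions already prepared in the discussion preceding the statement. First I would apply Ito's product rule to $\al_{N,j}(t) = 1 - \tr(\ga_{j,t} \Ga_{N,t})$, using equations \eqref{eqmainNpartBeldensnonlcount} and \eqref{eqmainnonlinpartBel1denscount} together with the identification of the driving martingales $M^j_t$ afforded by the unitarity of $L$, which makes all intensities $\tr(L_j^* L_j \Ga_{N,t}) = \tr(L_j^* L_j \ga_{j,t}) = 1$ and hence state-independent.

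The resulting expression for $d\al_{N,j}(t)$ splits naturally into three pieces: (a) a $dt$-contribution depending only on $H,\hat H,A$, (b) a $dt$-contribution depending on $L_j$, and (c) a stochastic piece driven by $dM^j_t$. Piece (c) disappears upon taking expectations and so plays no role in bounding $\E \al_N(t)$. Piece (a) is structurally identical to the corresponding contribution in the diffusive case, since the drift generated by $H_j + u(t,\Ga^{(j)}_{N,t})\hat H_j$ and by $A_{lj}/N$ has exactly the same form in both filtering equations. Therefore the estimate \eqref{eqestimfromprebious} may be invoked verbatim, yielding $\E|(\text{part (a)})| \le 7\|A\|_{HS}(\E \al_N(t) + 1/\sqrt N) + 4\sqrt 2\, \ka \|\hat H\| \E \al_N(t)$.

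For piece (b), the computation immediately preceding the theorem statement already performs the decisive cancellation: using the Ito multiplication rule $dN^j_t\, dN^j_t = dN^j_t$, the unitarity relation $L_j^* L_j = L_j L_j^* = \1$, and the preservation of unit trace $\tr \ga_{j,t} = \tr \Ga_{N,t} = 1$, one verifies that the $dt$-coefficient of piece (b) vanishes identically. This is precisely what allows the $28\|L\|^2$ term present in Theorem \ref{thmynonlinSch} to be absent here. Assembling pieces (a) and (b) gives the differential inequality
\[
\left|\frac{d}{dt}\E \al_{N,t}\right| \le 7\|A\|_{HS}\left(\E \al_N(t)+\frac{1}{\sqrt N}\right) + 4\sqrt 2\, \ka \|\hat H\| \E \al_N(t),
\]
and a standard application of Gronwall's lemma, with $4\sqrt 2 < 6$ absorbed into the stated constant, produces the claimed bound.

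The only genuine subtlety of the argument is the identification of the two driving martingales $M^j_t$ attached to the $N$-particle and limiting dynamics respectively. This identification is legitimate exactly because of the unitarity hypothesis, which forces both intensities to equal one; without it the two noise processes would have distinct, state-dependent intensities and no martingale coupling of the type used here would survive, as explained in the remark following \eqref{eqmainnonlinpartBel1count}. Consequently the unitarity assumption is not a convenience but the structural hinge on which the whole proof turns.
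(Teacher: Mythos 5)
Your proposal is correct and follows essentially the same route as the paper: identify the driving martingales via the unitarity-forced unit intensities, reuse the diffusive-case estimate \eqref{eqestimfromprebious} for the $H,\hat H,A$ part, verify that the $L_j$-dependent $dt$-coefficient cancels identically (the computation the paper carries out just before the theorem), discard the martingale terms under expectation, and close with Gronwall. No gaps.
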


\section{Appendix A: a technical estimate}

\begin{lemma}
 \label{lemmaonspectraces}
 Let $\ga$ be a one-dim projector in a Hilbert space, $\Ga$ a density matrix (positive operator with unit trace)
 and $L$ a bounded operator in this Hilbert space.  Then
 \begin{equation}
\label{eq1lemmaonspectraces}
|-4\,{\tr}\, (L\ga L\Ga )+2 \,{\tr}\,(\Ga(L \ga+\ga L))\,{\tr}\,(\Ga L+\ga L)
-4 \,{\tr}\,(\Ga \ga)\, {\tr} (\Ga L)\, {\tr} (\ga L)|
\le 20 \|L\|^2 {\tr} ((1-\ga) \Ga)
\end{equation}
for a self-adjoint $L$, and
 \[
|-{\tr} (\ga L \Ga L^*+\ga L^* \Ga L +\ga L^*\Ga L^*+\ga L \Ga L)
\]
\[
+ {\tr} (\ga \Ga L^*+\ga L \Ga) \, {\tr} (\ga(L^*+L))
+ {\tr} (\ga \Ga L+ \ga L^* \Ga) \, {\tr} (\Ga_N(L^*+L))
\]
 \begin{equation}
\label{eq2lemmaonspectraces}
-{\tr} (\Ga \ga)\, {\tr} (\Ga(L^*+L))\, {\tr} (\ga(L^*+L))|
\le 28 \|L\|^2 \, {\tr}\, ((1-\ga) \Ga)
\end{equation}
 for a general $L$.
 \end{lemma}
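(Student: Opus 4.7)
The plan is to exploit the algebraic structure of the two combinations on the left-hand sides, which are manifestly designed to vanish in the ``pure'' case $\Ga=\ga$, and then to bound the residual as a second-order expression in the deviation of $\Ga$ from $\ga$, controlled by $\tr((1-\ga)\Ga)$.

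First I would set $\ga=|\psi\ran\lan\psi|$ for a unit vector $\psi$ and write $p=\lan \psi,\Ga\psi\ran=\tr(\ga\Ga)$, so that $\tr((1-\ga)\Ga)=1-p$ and $\ga\Ga\ga=p\ga$ by one-dimensionality of $\ga$. As a preliminary sanity check, I would substitute $\Ga=\ga$ (so $p=1$) in both \eqref{eq1lemmaonspectraces} and \eqref{eq2lemmaonspectraces}: every trace in each expression collapses to a power of $\lan L\ran_\psi=\tr(\ga L)$ (and its conjugate for the general-$L$ case), and the three-term linear combinations cancel identically. This confirms that the polynomial combinations have been tuned to vanish on the diagonal $\Ga=\ga$, and is the algebraic fact that powers the whole argument.

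Next I would decompose $\Ga$ using the projection $\ga$ and its complement $\ga^\perp=1-\ga$:
\[
\Ga=\ga\Ga\ga+\ga\Ga\ga^\perp+\ga^\perp\Ga\ga+\ga^\perp\Ga\ga^\perp
=p\ga+R_{\mathrm{off}}+\ga^\perp\Ga\ga^\perp,
\]
and substitute this into the left-hand sides. The purely $p\ga$ contributions reproduce the sanity-check expression (rescaled by a power of $p$) and cancel. After collecting the remaining terms, the argument proceeds by first showing that the linear-in-$R_{\mathrm{off}}$ pieces cancel pairwise, by exactly the same combinatorial mechanism that produced the cancellation at $\Ga=\ga$ (this is the point where the carefully chosen coefficients $-4,2,-4$ in \eqref{eq1lemmaonspectraces} and analogously in \eqref{eq2lemmaonspectraces} really matter). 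What is left is quadratic in the deviation, and is bounded term by term.

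The two key trace inequalities I would use are
\[
|\tr(X\ga^\perp\Ga\ga^\perp Y)|\le \|X\|\,\|Y\|\,(1-p),
\]
which follows from $\ga^\perp\Ga\ga^\perp\ge 0$ and $\tr(\ga^\perp\Ga\ga^\perp)=1-p$, and
\[
|\tr(X\ga\Ga\ga^\perp Y)|\le \|X\|\,\|Y\|\,\sqrt{p(1-p)},
\]
obtained by writing $\Ga=\Ga^{1/2}\Ga^{1/2}$ and applying Hilbert--Schmidt Cauchy--Schwarz. Products of two cross-type residues thus contribute at most $\|L\|^2 p(1-p)\le \|L\|^2(1-p)$, and the ``diagonal'' residues directly give $\|L\|^2(1-p)$. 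Summing over all residual terms with their combinatorial multiplicities yields the constants $20$ and $28$ respectively. The generalization from self-adjoint $L$ in \eqref{eq1lemmaonspectraces} to arbitrary $L$ in \eqref{eq2lemmaonspectraces} proceeds by the same pattern, with $L$ and $L^*$ playing symmetric roles; roughly twice as many terms appear, which accounts for the modest increase from $20$ to $28$.

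The main obstacle is not conceptual but combinatorial: verifying the pairwise cancellation of the linear-in-$R_{\mathrm{off}}$ contributions and enumerating the quadratic residues in order to pin down the exact constants. The underlying mechanism---that the whole expression is second order around $\Ga=\ga$---is clean, but the explicit expansion is necessarily somewhat tedious; this is presumably why the authors prefer to quote the estimate as a lemma and relegate the calculation to the appendix.
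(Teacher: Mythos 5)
Your framework coincides, in substance, with the paper's own proof: the paper works in an orthonormal basis whose first vector spans the range of $\ga$, which is exactly your block decomposition $\Ga=\ga\Ga\ga+\ga\Ga\ga^\perp+\ga^\perp\Ga\ga+\ga^\perp\Ga\ga^\perp$ written in coordinates, and your two trace inequalities are precisely the coordinate-free forms of the paper's estimates $|\Ga_{jk}|\le\al$ for $j,k\neq 1$, $\sum_{j\neq 1}|\Ga_{1j}|^2\le\Ga_{11}\al$, and the Cauchy--Schwarz steps built on them. Your sanity check at $\Ga=\ga$ is correct, and the predicted structure of the residual (a piece linear in $\ga^\perp\Ga\ga^\perp$, pieces quadratic in the off-diagonal block, and cross terms) does match what the paper's explicit expansion produces.

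The gap is that you have deferred exactly the step that constitutes the proof. Vanishing of the expression at $\Ga=\ga$ only kills the zeroth-order term; it does not force the first-order term in the perturbation to vanish, and an uncancelled contribution linear in $\ga\Ga\ga^\perp$ would be of size $\|L\|^2\sqrt{p(1-p)}$, which destroys the claimed bound $O(1-p)$. So the assertion that ``the linear-in-$R_{\mathrm{off}}$ pieces cancel pairwise'' cannot be inferred from the sanity check; it has to be verified by expanding all four traces, and this verification \emph{is} the lemma --- the paper itself calls the cancellation ``remarkable'' and states that the author sees no intuitive reason for it. (In the paper's computation the surviving contributions linear in the off-diagonal block are exactly those carrying an extra factor $\Ga_{11}-1=-\al$ or an extra trace over the complementary block, hence of order $\al^{3/2}$; tracking these, together with the quadratic and diagonal residues, is also what produces the specific constants $20$ and $28$, which you likewise assert rather than derive.) Your plan would succeed if this bookkeeping were carried out --- the paper does it and it works --- but as written the decisive cancellation and the constants are claimed, not proved.
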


 \begin{proof}
 By the approximation argument it is sufficient to prove the Lemma for a finite-dimensional Hilbert space $\C^n$.
 Let $\al= {\tr} ((1-\ga) \Ga)$.
 Let us choose an orthonormal basis, where $\ga$ is the projection on the first basis vector.

 By positivity of $\Ga$ it follows that
\begin{equation}
\label{eq2alemmaonspectraces}
|\Ga_{jk}|\le \al \, \text{for} \, j,k\neq 1, \,  \text{and}
\, \max(|\Ga_{j1}|,|\Ga_{1j}|)\le \sqrt \al \, \text{for} \, j\neq 1.
\end{equation}

Let $L$ be a self-adjoint matrix.
 Then the expression under the module sign on the l.h.s. of \eqref{eq1lemmaonspectraces} writes down as
 \[
  -4 (L\Ga L)_{11} +2 [(L \Ga)_{11}+(\Ga L)_{11}]({\tr}\,(\Ga L)+L_{11})-4 \Ga_{11} L_{11}\, {\tr}\,(\Ga L)
\]
\[
=-4\sum_{j,k} L_{1j}\Ga_{jk}L_{k1}+2[2L_{11}\Ga_{11}+\sum_{j\neq 1} (\Ga_{1j}L_{j1}+L_{1j}\Ga_{j1})]({\tr}\,(\Ga L)+L_{11})
-4 \Ga_{11} L_{11}\, {\tr}\,(\Ga L)
\]
\[
=-4 L_{11} \sum_{j\neq 1} (L_{1j}\Ga_{j1}+L_{j1}\Ga_{1j})-4\sum_{j\neq 1,k\neq 1} L_{1j}\Ga_{jk}L_{k1}
+2\sum_{j\neq 1} (\Ga_{1j}L_{j1}+L_{1j}\Ga_{j1})({\tr}\,(\Ga L)+L_{11})
\]
\[
=2\sum_{j\neq 1} (\Ga_{1j}L_{j1}+L_{1j}\Ga_{j1})L_{11}(\Ga_{11}-1)
-4\sum_{j\neq 1,k\neq 1} L_{1j}\Ga_{jk}L_{k1}
+2\sum_{j\neq 1} (\Ga_{1j}L_{j1}+L_{1j}\Ga_{j1})({\tr}\,(\Ga L)-L_{11}\Ga_{11})
\]
\[
=-2\sum_{j\neq 1} (\Ga_{1j}L_{j1}+L_{1j}\Ga_{j1})L_{11}\al
-4\sum_{j\neq 1,k\neq 1} L_{1j}\Ga_{jk}L_{k1}
\]
\[
+2\left(\sum_{j\neq 1} (\Ga_{1j}L_{j1}+L_{1j}\Ga_{j1})\right)^2
+2\sum_{j\neq 1} (\Ga_{1j}L_{j1}+L_{1j}\Ga_{j1})
\sum_{j\neq 1,k\neq 1} L_{kj}\Ga_{jk}.
\]
Here all terms are of order $\al$, because of \eqref{eq2alemmaonspectraces}.

More precisely,
\[
|\sum_{j\neq 1,k\neq 1} L_{kj}\Ga_{jk}|=|{\tr}[(1-\ga) L(1-\ga)\Ga(1-\ga)]|
\le \|L\| {\tr}[(1-\ga)\Ga]\le \|L\| \al.
\]
Moreover,
\[
\sum_{j\neq 1}|\Ga_{j1}|^2=\sum_{j\neq 1}|\Ga_{1j}|^2
\le \Ga_{11} \sum_{j\neq 1} \Ga_{jj}\le \Ga_{11}\al \le \al.
\]
Hence,
\[
|\sum_{j\neq 1} (\Ga_{1j}L_{j1})|^2
\le \sum_{j\neq 1} |\Ga_{1j}|^2 \sum_{j\neq 1} |L_{j1}|^2\le \|L\|^2 \al,
\]
\[
|\sum_{j\neq 1} (\Ga_{j1}L_{1j})|^2\le \|L^T\|^2 \al= \|L\|^2 \al,
\]
and thus
\begin{equation}
\label{eq3lemmaonspectraces}
|\sum_{j\neq 1} (\Ga_{1j}L_{j1}\pm L_{1j}\Ga_{j1})|\le 2\|L\| \sqrt \al.
\end{equation}
Finally,
\[
|\sum_{j\neq 1,k\neq 1} L_{1j}\Ga_{jk}L_{k1}|^2 \le \sum_{j,k} |L_{1j}|^2 |L_{k1}|^2
\sum_{j\neq 1,k\neq 1} |\Ga_{jk}|^2 \le \|L\|^4 (\sum_{j\neq 1} |\Ga_{jj}|)^2
\le \|L\|^4 \al^2,
\]
where the estimate
\[
|\Ga_{jk}|^2 \le \Ga_{jj}\Ga_{kk}
\]
for all $j,k$ was used (arising from the positivity of $\Ga$).

Putting the estimates together we get \eqref{eq1lemmaonspectraces}.

For a general $L$ we can write $L=L^s+L^a$, where $L^s=(L+L^*)/2$ is self-adjoint and
$L^a=(L-L^*)/2$ is anti-Hermitian. Plugging this into the l.h.s.
of \eqref{eq2lemmaonspectraces} leads to several cancelations, so that
 the expression under the module sign in the l.h.s. becomes equal to
\[
-4\,{\tr}\, (L^s\ga L^s\Ga )+2 \,{\tr}\,(\Ga(L^s \ga+\ga L^s))\,{\tr}\,(\Ga L^s+\ga L^s)
-4 \,{\tr}\,(\Ga \ga)\, {\tr} (\Ga L^s)\, {\tr} (\ga L^s)
\]
\[
+2{\tr} (\ga [\Ga, L^a])({\tr} (\Ga L_s)-{\tr} (\ga L^s)).
\]

Everything apart from the last term is already estimated  by \eqref{eq1lemmaonspectraces}.

By \eqref{eq3lemmaonspectraces} (that is valid for arbitrary $L$),
\[
|{\tr} (\ga [\Ga, L^a])|=|[\Ga,L^a]_{11}|=|\sum_{j\neq 1} (\Ga_{1j}L^a_{j1}-L^a_{1j}\Ga_{j1})|
\le 2\|L\| \sqrt \al,
\]
and
\[
|{\tr} (\Ga L_s)-{\tr} (\ga L^s)|
\]
\[
=|(\Ga_{11}-1)L^s_{11}+\sum_{j\neq 1} (\Ga_{1j}L^s_{j1}-L^s_{1j}\Ga_{j1})
+\sum_{j,k\neq 1} (\Ga_{jk}L^s_{kj})|\le 4 \|L\| \sqrt \al,
\]
which implies \eqref{eq2lemmaonspectraces}.
 \end{proof}

\begin{remark} Our proof of Lemma \ref{lemmaonspectraces} is based on some remarkable
cancellation of terms in concrete calculations
via coordinate representations. The author does not see any intuitive reasons for its validity.
Neither is it clear whether it can be extended to arbitrary density matrixes $\ga$,
not just one-dimensional projectors.
\end{remark}

\section{Appendix B: McKean-Vlasov diffusions in Hilbert spaces}

The McKean-Vlasov nonlinear diffusions are well studied processes, due to a large variety of applications.
However there seem to be only few publications for infinite-dimensional case,
see \cite{Ahmed16} and references therein. Therefore for completeness we provide here some basic results
for a class of McKean-Vlasov diffusions in Hilbert spaces stressing explicit bounds and errors.
 Namely, we are interested in the Cauchy problems
\begin{equation}
\label{eqMcKeVlBan}
dX_t=AX_t \, dt +b_t(X_t, \E X_t^{\otimes K}) \, dt +(\si (X_t), dB_t), \quad X_0=Y,
\end{equation}
in a complex Hilbert space $\HC$, equipped with the scalar product $(.,.)$ and
the corresponding norm $\|.\|$, and (as an auxiliary tool) in more standard equations
\begin{equation}
\label{eqMcKeVlBanau}
dX_t=AX_t \, dt +b_t(X_t, \xi_t) \, dt +(\si (X_t), dB_t), \quad X_0=Y.
\end{equation}
Here $K=1$ or $K=2$ (which are the most important cases for applications),
$B_t$ is the standard $n$-dimensional Wiener process,
defined on some complete probability space $(\Om, \FC, \P)$,
$\si(Y)=(\si_1(Y), \cdots, \si_n(Y))$
with each $\si_j$ a continuous mapping $\HC\to \HC$, $b$ a continuous mapping $\R \times \HC\times \HC^{\otimes K} \to \HC$,
$A$ a generator of a strongly continuous operator semigroup $e^{At}$ of contractions in $\HC$, $\xi_t$ a given
(deterministic) continuous curve in $\HC^{\otimes K}$. With some abuse of notations we shall denote by $(.,.)$
and $\|.\|$ also the scalar product and the norm in the tensor product Hilbert space $\HC^{\otimes K}$.
$\E$ in \eqref{eqMcKeVlBan} means the expectation with respect to the Wiener process.
A solution process to \eqref{eqMcKeVlBan} is called a nonlinear diffusion
of the McKean-Vlasov type, because of the dependence of the coefficient on such expectation.

To obtain effective bounds for growth and continuous dependence on parameters of equations
\eqref{eqMcKeVlBanau} we shall follow the strategy systematic developed in  \cite{Kolbook19}
for deterministic Banach space-valued equations. Namely,
we shall use the generalized fixed point principle of Weissinger type in the following form
(see e.g. \cite{Kolbook19}, Prop. 9.1 and 9.3 for simple proofs). If $\Phi$
is a mapping from a complete metric space $(M, \rho)$ to itself such that
\[
\rho(\Phi^k(x),\Phi^k(y)) \le \al_k \rho(x,y)
\]
with $a=1+\sum_j \al_j<\infty$, then $\Phi$ has a unique fixed point $x^*$,
and $\rho(x,x^*)\le a \rho(x, \Phi(x))$ for any $x$. Moreover, for any two mappings $\Phi_1, \Phi_2$
satisfying these conditions and such that $\rho(\Phi_1(x),\Phi_2(x)) \le \ep$ for all $x$,
the corresponding fixed points enjoy the estimate
\[
\rho (x_1^*, x_2^*)\le \ep a.
\]

Let us start with \eqref{eqMcKeVlBanau}. We shall work with the so-called mild form of this equation:
\begin{equation}
\label{eqMcKeVlBanaumi}
X_t=Y+\int_0^t e^{A(t-s)} [b_s(X_s, \xi_s) \, ds +(\si (X_s), dB_s)].
\end{equation}

For a Hilbert space $\BC$, which is either $\HC$ or $\HC^{\otimes 2}$, let $C_{ad}([0,T],\BC)$ denote
 the Banach space of adapted continuous
$\BC$-valued processes, equipped with the norm
\[
\|X_.\|_{ad,T}=\sup_{t\in [0,T]} \sqrt{\E \|X_t\|^2}.
\]
Its subspace of deterministic curves will be denoted   $C([0,T],\BC)$.
For elements $\xi_.$ of this subspace the norm turns to the standard sup-norm
\[
\|\xi_.\|_{ad,T}=\sup_{t\in [0,T]} \|\xi_t\|.
\]

By $C_{Y,ad}([0,T],\BC)$ and  $C_Y([0,T],\BC)$ we shall denote the
subsets of these spaces consisting of curves with $X(0)=Y$.

In our estimates we shall encounter the so-called Le Roy function of index $1/2$:
\begin{equation}
\label{eqLeRoy}
R(z)=\sum_{k=0}^{\infty} \frac{z^k}{\sqrt{k!}},
\end{equation}
which for stochastic equations play the same role as the exponential and the Mittag-Leffler function
for deterministic equations.

\begin{prop}
\label{propSDEBanach}
 Let
\begin{equation}
\label{eq1propSDEBanach}
\|b_t(Z_1, \xi_1)-b_t(Z_2,\xi_2)\|\le \ka_1 \|Z_1-Z_2\|+ \ka_2 \|\xi_1-\xi_2\|, \quad
\|\si(Z_1)-\si (Z_2)\|\le \ka_3 \|Z_1-Z_2\|.
\end{equation}
Then for any $T>0$, $Y\in \HC$, $\xi=\xi_. \in C([0,T],\HC^{\otimes K})$,
equation \eqref{eqMcKeVlBanaumi} has the unique global solution $X_. \in C_{ad}([0,T],\HC)$,
and it satisfies the estimate
\begin{equation}
\label{eq2propSDEBanach}
\|X_.-Y\|^2_{ad,T} \le 2t M^2(t)[\int_0^t |b_s(Y,\xi_s)|^2 \, ds+\si^2(Y)],
\end{equation}
where
\begin{equation}
\label{eq5propSDEBanach}
M(t)=R(\sqrt{2(\ka_3^2 +\ka_1^2)}\max(\sqrt t,t)).
\end{equation}
Moreover, for two initial conditions $Y_1,Y_2$
and two curves $\xi^1$, $\xi^2$ the corresponding solutions satisfy the estimate
\begin{equation}
\label{eq3propSDEBanach}
\|X^1_.-X^2_.\|^2_{ad,T} \le 2M^2(t)\left(\|Y^1-Y^2\|^2+t^2\ka_2^2 \|\xi^1_.-\xi^2_.\|^2_{ad,T} \right).
\end{equation}
\end{prop}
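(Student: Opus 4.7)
The plan is to apply the Weissinger fixed point principle (as recalled just before the proposition) to the mild formulation \eqref{eqMcKeVlBanaumi}, working in the complete metric space $C_{Y,ad}([0,T],\HC)$ with metric inherited from $\|\cdot\|_{ad,T}$. Define
\[
\Phi(X)_t = Y + \int_0^t e^{A(t-s)} b_s(X_s,\xi_s)\,ds + \int_0^t e^{A(t-s)}(\si(X_s), dB_s).
\]
Adaptedness of $\Phi(X)$ is inherited from $X$, and $\Phi$ maps $C_{Y,ad}([0,T],\HC)$ into itself by \eqref{eq1propSDEBanach} and the contractivity of $e^{At}$.

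The central estimate is on the iterated Lipschitz constants of $\Phi$. For $X^1,X^2 \in C_{Y,ad}([0,T],\HC)$, using $\|a+b\|^2\le 2\|a\|^2+2\|b\|^2$, the contractivity of $e^{A(t-s)}$, Cauchy--Schwarz on the deterministic part, and the Ito isometry on the stochastic part (together with the Lipschitz bounds \eqref{eq1propSDEBanach}), I expect to obtain the recursion
\[
\E\|\Phi(X^1)_t-\Phi(X^2)_t\|^2 \le 2\bigl(\ka_1^2 t+\ka_3^2\bigr)\int_0^t \E\|X^1_s-X^2_s\|^2\,ds.
\]
A routine induction on $k$ then yields
\[
\E\|\Phi^k(X^1)_t-\Phi^k(X^2)_t\|^2 \le \frac{[2(\ka_1^2 T+\ka_3^2)\,t]^k}{k!}\,\|X^1_.-X^2_.\|_{ad,T}^2.
\]
Taking square roots and the supremum over $t\in[0,T]$ shows that $\Phi^k$ is Lipschitz on $C_{Y,ad}([0,T],\HC)$ with constant $\al_k=[2(\ka_1^2 T+\ka_3^2)T]^{k/2}/\sqrt{k!}$. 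Summing gives $1+\sum_{k\ge 1}\al_k = R(\sqrt{2(\ka_1^2 T+\ka_3^2)T})$, and the elementary bound $2\ka_1^2 T^2+2\ka_3^2 T \le 2(\ka_1^2+\ka_3^2)\max(T^2,T)$ together with monotonicity of $R$ shows this is $\le M(T)$.

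Existence and uniqueness of a global mild solution then follow at once from the Weissinger principle. Bound \eqref{eq2propSDEBanach} comes from applying the Weissinger estimate $\rho(x,x^*)\le a\,\rho(x,\Phi(x))$ with $x$ the constant process $Y$: the same split into deterministic and stochastic parts gives $\E\|\Phi(Y)_t-Y\|^2 \le 2t\int_0^t \|b_s(Y,\xi_s)\|^2\,ds + 2t\,\si^2(Y)$, and multiplying by $M(T)^2$ reproduces exactly the right-hand side of \eqref{eq2propSDEBanach}. For the continuous dependence \eqref{eq3propSDEBanach}, I compare the two maps $\Phi_1,\Phi_2$ built from $(Y_i,\xi^i)$: because $\si$ does not depend on $\xi$, the stochastic integrals cancel and, uniformly in $X$,
\[
\E\|\Phi_1(X)_t-\Phi_2(X)_t\|^2 \le 2\|Y_1-Y_2\|^2 + 2t^2\ka_2^2\|\xi^1_.-\xi^2_.\|_{ad,T}^2,
\]
so the perturbation clause of the Weissinger principle (with the same constant $a\le M(T)$) yields \eqref{eq3propSDEBanach}.

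The only genuine bookkeeping obstacle is the induction: one must keep the recursion pointwise in $t$, in terms of $\E\|\cdot\|^2$, and only take the supremum in $t$ at the very end, so that $k$-fold integration produces the $1/k!$ factor responsible for the Le Roy function rather than a mere exponential; collapsing to $\|\cdot\|_{ad,T}$ prematurely would lose this improvement. Once the recursion is set up this way, existence, uniqueness, the growth estimate \eqref{eq2propSDEBanach} and the joint Lipschitz dependence \eqref{eq3propSDEBanach} all follow from a single application of the Weissinger machinery.
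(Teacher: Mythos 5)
Your proposal is correct and follows essentially the same route as the paper: the same mild-form fixed-point map, the same one-step estimate $2(\ka_3^2+\ka_1^2 t)\int_0^t\E\|X^1_s-X^2_s\|^2ds$, iteration to get the $1/\sqrt{k!}$ Lipschitz constants summing to the Le Roy bound $M(t)$, and the two clauses of the Weissinger principle for \eqref{eq2propSDEBanach} and \eqref{eq3propSDEBanach}. The only cosmetic difference is that you absorb $\ka_1^2 t+\ka_3^2$ into a constant before iterating, whereas the paper splits into the cases $t\le 1$ and $t\ge 1$ (producing a $(2k-1)!!$ along the way); both land on the same $M(t)$.
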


\begin{proof}
A solution to  \eqref{eqMcKeVlBanaumi} is a fixed point of the mapping
\begin{equation}
\label{eq4propSDEBanach}
[\Phi_{Y,\xi}(X_.)](t)=Y+\int e^{A(t-s)} [b_s(X_s, \xi_s) \, ds +(\si (X_s), dW_s)]
\end{equation}
in $C_{Y,ad}([0,T])(\HC)$.
For two curves $X^1_.,X^2_.\in C_{Y,ad}([0,T])(\HC^{\otimes K})$ we have
by the contraction property of the semigroup $e^{At}$ and Ito's isometry that
\[
\E \|[\Phi_{Y,\xi}(X^1_.)](t)-[\Phi_{Y,\xi}(X^2_.)](t)\|^2
\]
\[
\le 2 \E \|\int_0^t e^{A(t-s)} (b_s(X^1_s, \xi_s) - b_s(X^2_s, \xi_s))\, ds\|^2
+2 \E \|\int_0^t e^{A(t-s)} (\si(X^1_s) - \si(X^2_s))\, dW_s\|^2
\]
\[
\le 2 \E \left(\int_0^t \ka_1 \|X^1_s-X_s^2\|\, ds\right)^2
+2 \ka_3^2 \E \int_0^t \|X^1_s - X^2_s\|^2\, ds.
\]
Applying Cauchy-Schwartz to the first integral we get
\[
\left(\int_0^t \ka_1 \|X^1_s-X_s^2\|\, ds\right)^2
\le \ka_1^2 t \int_0^t \|X^1_s-X_s^2\|^2\, ds
\]
and thus
\[
\E \|[\Phi_{Y,\xi}(X^1_.)](t)-\Phi_{Y,\xi}(X^2_.)](t)\|^2
\le 2 (\ka_3^2 +\ka_1^2 t)\E \int_0^t \|X^1_s - X^2_s\|^2\, ds.
\]
This estimate is easy to iterate. Namely, for $t\le 1$ we get for the $k$th power of $\Phi$ the estimate
\[
\E \|[\Phi^k_{Y,\xi}(X^1_.)](t)-[\Phi^k_{Y,\xi}(X^2_.)](t)\|^2
\le 2^k (\ka_3^2 +\ka_1^2)^k \frac{t^k}{k!}\|X^1_. - X^2_.\|^2_{ad,T},
\]
so that
\[
\|[\Phi^k_{Y,\xi}(X^1_.)-\Phi^k_{Y,\xi}(X^2_.)]\|_{ad,T}
\le \frac{1}{\sqrt {k!}}[2 (\ka_3^2 +\ka_1^2) t]^{k/2}\|X^1_. - X^2_.\|_{ad,T}.
\]
And for $t\ge 1$ we get for the $k$th power of $\Phi$ the estimate
\[
\E \|[\Phi^k_{Y,\xi}(X^1_.)](t)-[\Phi^k_{Y,\xi}(X^2_.)](t)\|^2
\le 2^k (\ka_3^2 +\ka_1^2)^k \frac{t^{2k}}{(2k-1)!!}\|X^1_. - X^2_.\|^2_{ad,T},
\]
so that
\[
\|[\Phi^k_{Y,\xi}(X^1_.)-\Phi^k_{Y,\xi}(X^2_.)]\|_{ad,T}
\le \frac{1}{\sqrt {k!}}[2 (\ka_3^2 +\ka_1^2) t^2]^{k/2}\|X^1_. - X^2_.\|_{ad,T}.
\]
Therefore the conditions of the generalized fixed point theorem above is satisfied
for $a=M(t)$ implying all statements of the theorem.
\end{proof}

A solution to the mild form
\begin{equation}
\label{eqMcKeVlBanmi}
X_t=Y+\int_0^t e^{A(t-s)} [b_s(X_s, \E X_s^{\otimes K}) \, ds +(\si (X_s), dB_s)]
\end{equation}
of equation \eqref{eqMcKeVlBan} is a fixed point of the mapping
$\Ga:  C_Y([0,T], \HC^{\otimes K})\to  C_Y([0,T], \HC^{\otimes K})$
that maps $\xi_.$ to $\E X_.^{\otimes K}$, where $X_.$ is the solution to \eqref{eqMcKeVlBanaumi}.
By \eqref{eq3propSDEBanach}, for two curves $\xi^1$ and $\xi_2$ we get for the corresponding
solutions $X_.^1$ and $X_.^2$ the estimates
\[
\| \E (X_t^1)-\E (X_t^2) \|
\le  2 M(t)t\ka_2 \|\xi^1_.-\xi^2_.\|_{ad,t}.
\]
for $K=1$ and
\[
\| \E (X_t^1)^{\otimes 2}-\E (X_t^2)^{\otimes 2} \|
\le \E \|(X_t^1-X_t^2)\otimes (X^1_t)\|+\E \|(X^2_t)\otimes (X_t^1-X_t^2)\|
\]
\[
\le  2 M(t)t\ka_2 \|\xi^1_.-\xi^2_.\|_{ad,t} \max(\|X^1_.\|_{ad,t}, \|X^2_.\|_{ad,t})
\]
for $K=2$.

Hence for small times $t\le t_0$ the mapping $\Ga$ is a contraction and hence has a unique fixed point.
For $K=1$ the time $t_0$ does not depend on $Y$ and hence one can build a unique global solution by iterations.
Consequently we proved the following statement.

\begin{prop}
\label{propSDEBanach2}
 Let the assumptions of Proposition \ref{propSDEBanach} hold. If $K=1$,
 equation \eqref{eqMcKeVlBanmi} has a unique global solution
 for any initial $Y$, and
 \[
 \| \E X_. -Y\|_{ad,T} \le C(T)
\]
with a constant $C(T)$ depending on $\ka_1, \ka_2, \ka_3, \|Y\|$.
If $K=2$ equation \eqref{eqMcKeVlBanmi} has a unique local solution
for times of order $\|Y\|^{-1}$.
\end{prop}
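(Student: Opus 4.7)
The plan is to treat \eqref{eqMcKeVlBanmi} as a fixed-point equation on the space of continuous tensor-valued curves and leverage Proposition~\ref{propSDEBanach} applied to the auxiliary equation \eqref{eqMcKeVlBanaumi} with a frozen moment curve $\xi_.$. Concretely, I would define
\[
\Ga : C([0,T], \HC^{\otimes K}) \to C([0,T], \HC^{\otimes K}), \qquad \Ga(\xi_.)(t) := \E X_t^{\otimes K},
\]
where $X_.$ denotes the unique mild solution of \eqref{eqMcKeVlBanaumi} with the given $\xi_.$ and initial condition $Y$. This $\Ga$ is well defined by Proposition~\ref{propSDEBanach}, and a fixed point of $\Ga$ is precisely the moment curve of a solution of \eqref{eqMcKeVlBanmi}; the actual solution process is then recovered by plugging the fixed point back into the auxiliary equation.

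Contractivity of $\Ga$ on short intervals is extracted from the continuous-dependence bound \eqref{eq3propSDEBanach}. With equal initial data $Y$ on both sides, that bound reduces to $\|X^1_. - X^2_.\|_{ad,t} \le \sqrt{2}\, M(t)\, t\, \ka_2\, \|\xi^1_. - \xi^2_.\|_{ad,t}$. For $K = 1$, Jensen's inequality $\|\E Z\| \le \sqrt{\E \|Z\|^2}$ immediately yields a Lipschitz estimate for $\Ga$ whose constant vanishes as $t \to 0$ independently of $Y$. For $K = 2$, the pointwise identity
\[
Z_1^{\otimes 2} - Z_2^{\otimes 2} = (Z_1 - Z_2) \otimes Z_1 + Z_2 \otimes (Z_1 - Z_2)
\]
introduces an extra factor of $\max(\|X^1_.\|_{ad,t}, \|X^2_.\|_{ad,t})$, which by the growth bound \eqref{eq2propSDEBanach} is of order $\|Y\| + O(t)$ for small $t$. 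The Lipschitz constant of $\Ga$ is then of order $\|Y\|\, t$, so contractivity holds on an interval of length proportional to $\|Y\|^{-1}$.

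On a small interval $[0, t_0]$ Banach's fixed-point theorem gives a unique local solution. For $K = 1$ the length $t_0$ depends only on $\ka_1, \ka_2, \ka_3$, not on the current state, so I would iterate the local construction on successive intervals $[k t_0, (k+1) t_0]$, each restart being a problem of the same form with initial condition $X_{kt_0}$. The growth estimate $\|\E X_. - Y\|_{ad,T} \le C(T)$ follows by applying \eqref{eq2propSDEBanach} segment by segment and concatenating via a Gronwall-type inequality, with the resulting $C(T)$ depending on $\ka_1, \ka_2, \ka_3, \|Y\|$ and $T$.

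The main obstacle is the $K = 2$ regime: because the Lipschitz constant of $\Ga$ scales with $\|Y\|$, the local existence window shrinks as $\|Y\|^{-1}$, and iterating cannot extend past a finite time without an a priori bound on $\|X_t\|$. Such a bound is unavailable at this level of generality, which is why only local well-posedness is asserted; additional structure --- for example, the almost-sure norm preservation exhibited in \eqref{eqBelpreservenorm} --- is precisely what is needed to upgrade this to the global theory used in Theorem~\ref{thwelpos}.
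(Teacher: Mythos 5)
Your proposal is correct and follows essentially the same route as the paper: the author also defines $\Ga(\xi_.) = \E X_.^{\otimes K}$ via the auxiliary equation \eqref{eqMcKeVlBanaumi}, extracts contractivity from \eqref{eq3propSDEBanach} (using the same tensor decomposition for $K=2$, which produces the factor $\max(\|X^1_.\|_{ad,t},\|X^2_.\|_{ad,t})$ and hence the $\|Y\|^{-1}$ time scale), and iterates the $Y$-independent local construction to obtain the global solution when $K=1$.
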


Finally let us point out a situation where solutions to mild equations solve also the initial SDEs.

 \begin{prop}
\label{propSDEBanach3}
Let $D$ be an invariant core for the semigroup $e^{At}$ , which is itself a
Banach space with respect to some norm $\|.\|_D$. Let $b$ and $\si$ be continuous
mappings $\R\times \HC\times \HC^{\otimes K} \to D$ and $\HC\to D$. Then for any $Y\in D$ the solutions to mild equations
\eqref{eqMcKeVlBanmi} and \eqref{eqMcKeVlBanaumi} solve the corresponding SDEs.
\end{prop}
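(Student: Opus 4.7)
The plan is to promote the mild solutions to strong ones via the standard SPDE argument: when the coefficients take values in $D \subset \mathrm{Dom}(A)$, the convolution integrals remain in $\mathrm{Dom}(A)$ and can be differentiated to recover the SDE. Because the nonlinear dependence $\xi_t = \E X_t^{\otimes K}$ in \eqref{eqMcKeVlBanmi} produces, once the solution exists, just a deterministic continuous curve in $\HC^{\otimes K}$, it suffices to treat \eqref{eqMcKeVlBanaumi}; the same argument then transfers automatically to \eqref{eqMcKeVlBanmi}.

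First I would verify that $X_t \in D$ almost surely. The free term $e^{At}Y$ lies in $D$ because $Y \in D$ and $D$ is invariant. Continuity of $b$ and $\si$ into $D$ together with continuity of $s \mapsto X_s$ in $\HC$ gives a continuous $D$-valued integrand in the Bochner integral $\int_0^t e^{A(t-s)} b_s(X_s, \xi_s)\, ds$. The stochastic convolution $\int_0^t e^{A(t-s)}(\si(X_s), dB_s)$ is a well-defined $D$-valued square-integrable martingale once one has a uniform bound on $e^{At}|_D$, which follows from the closed graph theorem applied to $e^{At} : D \to D$ together with a standard local argument in $t$. Throughout the argument one also uses that $A : D \to \HC$ is bounded, which is again a closed graph consequence: its graph is closed in $D \times \HC$ since $\|\cdot\|_D \ge \|\cdot\|$ and $A$ is closed on $\HC$.

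Next I would apply the identity $e^{At}\phi - \phi = \int_0^t A e^{As}\phi\, ds$, valid for $\phi \in D$, termwise to the mild equation, and invoke the deterministic and stochastic Fubini theorems to exchange the orders of integration. This turns, for instance,
\[
\int_0^t e^{A(t-s)} b_s\, ds = \int_0^t b_s\, ds + \int_0^t A\!\int_0^r e^{A(r-s)} b_s\, ds\, dr,
\]
and analogously for the stochastic convolution. Summing the three resulting expressions and collecting the inner $A$-integrands back into $AX_r$, the mild equation becomes
\[
X_t = Y + \int_0^t AX_r\, dr + \int_0^t b_r(X_r, \xi_r)\, dr + \int_0^t (\si(X_r), dB_r),
\]
which is exactly the strong form of the SDE.

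The main obstacle is the stochastic Fubini step applied to $A e^{A(r-s)}(\si(X_s), dB_s)$, which requires joint measurability and a bound of the shape $\E \int_0^T\!\int_0^r \|A e^{A(r-s)} \si(X_s)\|^2\, ds\, dr < \infty$. Both are expected to follow from boundedness of $A : D \to \HC$, contractivity of $e^{At}$, and the a priori $\HC$-bound on $X_s$ from Propositions \ref{propSDEBanach} and \ref{propSDEBanach2}; but checking the precise hypotheses of a stochastic Fubini theorem (for example Da Prato--Zabczyk, Theorem 4.33) in the Hilbert-space setting is the one real technicality. Once this is in place, the argument delivers strong solutions for both \eqref{eqMcKeVlBanaumi} and \eqref{eqMcKeVlBanmi} simultaneously.
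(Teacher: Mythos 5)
Your argument is correct and, unlike the paper's one-sentence proof (``direct application of Ito's rule, the justification of the differentiation required is the consequence of the assumptions made''), it actually exhibits the mechanism. The two routes are the two standard ways of passing from mild to strong solutions: the paper implicitly applies Ito's formula to $s\mapsto e^{A(t-s)}X_s$, which requires making sense of the derivative through the unbounded generator, whereas you use the semigroup identity $e^{At}\phi-\phi=\int_0^t Ae^{As}\phi\,ds$ on $D$ together with the deterministic and stochastic Fubini theorems. Your version is the more robust one here, since the only analytic inputs are boundedness of $A\colon D\to\HC$ and the square-integrability bound needed for stochastic Fubini, both of which follow from the hypotheses; you correctly identify the Fubini step as the one genuine technicality. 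Two small points to tidy up. First, the closed-graph argument for $A\colon D\to\HC$ (and for uniform boundedness of $e^{At}|_D$) needs the embedding $D\hookrightarrow\HC$ to be continuous; this is the condition $\|\cdot\|_D\ge\|\cdot\|$ that appears in Theorem \ref{thwelpos}(v) but is not repeated in the statement of Proposition \ref{propSDEBanach3}, so you should either assume it or note that it is implicit. Second, the stochastic convolution $\int_0^t e^{A(t-s)}(\si(X_s),dB_s)$ is not in general a martingale in $t$ (only the un-convolved stochastic integral is); what you actually need, and what your bound delivers, is that it is a well-defined square-integrable $D$-valued process. Neither point affects the substance of the argument.
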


\begin{proof} This follows from the direct application of Ito's rule. The justification of the differentiation
required is the consequence of the assumptions made.
\end{proof}

\end{document}